
\documentclass[11pt]{article}
\usepackage{amsfonts,amsmath,amssymb,amsthm}
\usepackage{mathtools}
\usepackage[margin=1in]{geometry}
\usepackage{algorithm}
\usepackage[noend]{algpseudocode}

\DeclareMathOperator{\nnz}{nnz}

\begin{document}

\title{Partial resampling to approximate covering integer programs\footnote{
A preliminary version of this paper appeared in the proceedings of the Symposium on Discrete Algorithms (SODA) 2016.
}}

\author{Antares Chen\thanks{University of Chicago, Chicago, IL 60637.  This work was done while a student at Montgomery Blair High School, Silver Spring, MD 20901.
Email: \texttt{antaresc@uchicago.edu}.} \\
\and
David G. Harris\thanks{Department of Computer Science, University of Maryland,
College Park, MD 20742.
Research supported in part by NSF Awards CNS-1010789 and CCF-1422569.
Email: \texttt{davidgharris29@gmail.com}} \\
\and
Aravind Srinivasan\thanks{Department of Computer Science and
Institute for Advanced Computer Studies, University of Maryland,
College Park, MD 20742.
Research supported in part by NSF Awards CNS-1010789 and CCF-1422569, and by a research award from Adobe, Inc.
Email: \texttt{srin@cs.umd.edu}}
}

\newcommand*\Let[2]{\State #1 $\gets$ #2}
\newcommand{\integral}[4]{\int \limits_{#1}^{#2}\! #3 \,d#4}
\newcommand{\pd}[2]{\frac{\partial #1}{\partial #2}}
\newcommand{\cosine}[1]{\cos \left(#1\right)}
\newcommand{\sine}[1]{\sin \left(#1\right)}
\newcommand{\n}{\nonumber \\}
\newcommand{\answer}[1]{\boxed{\mathbb{#1}}}
\newcommand{\mat}[1]{\begin{bmatrix}#1\end{bmatrix}}
\newcommand{\smat}[1]{\left(\begin{smallmatrix}#1\end{smallmatrix}\right)}
\newcommand{\eqn}[1]{\begin{equation}#1\end{equation}}
\newcommand{\eqna}[1]{\begin{align*}#1\end{align*}}
\newcommand{\prob}[1]{\text{Pr} \left[ #1 \right]}

\newcommand{\amin}{a_{\text{min}}}
\newcommand{\blank}{\vspace{0.5cm}}
\newcommand{\code}{\texttt}
\newcommand{\lp}{\left(}
\newcommand{\rp}{\right)}
\newcommand{\lb}{\left[}
\newcommand{\rb}{\right]}

\mathchardef\mhyphen="2D
\newcommand\chernoffL{\text{Chernoff-L}}
\newcommand\chernoffU{F_+}

\newcommand{\alg}[1]{
	\noindent\fbox{
		\begin{varwidth}{\dimexpr\linewidth-2\fboxsep-2\fboxrule\relax}
		\begin{algorithmic}[1]
		#1
		\end{algorithmic}
	\end{varwidth}%
}}

\newtheorem{theorem}{Theorem}[section]
\newtheorem{proposition}[theorem]{Proposition}
\newtheorem{lemma}[theorem]{Lemma}
\newtheorem{corollary}[theorem]{Corollary}
\newtheorem{definition}[theorem]{Definition}
\newtheorem{conjecture}[theorem]{Conjecture}
\newcommand{\bE}{\ensuremath{\mathbf{E}}}

\date{}

\maketitle

\begin{abstract}
  We consider column-sparse covering integer programs, a generalization of set cover, which have a long line of research of (randomized) approximation algorithms. We develop a new rounding scheme based on the Partial Resampling variant of the Lov\'{a}sz Local Lemma developed by Harris \& Srinivasan (2019).

  This achieves an approximation ratio of $1 + \frac{\ln (\Delta_1+1)}{\amin} + O\Big( \log(1 + \sqrt{ \frac{\log (\Delta_1+1)}{\amin}}) \Big)$, where $\amin$ is the minimum covering constraint and $\Delta_1$ is the maximum $\ell_1$-norm of any column of the covering matrix (whose entries are scaled to lie in $[0,1]$). When there are additional constraints on the variable sizes, we show an approximation ratio of $\ln \Delta_0 + O(\log \log \Delta_0)$  (where $\Delta_0$ is the maximum number of non-zero entries in any column of the covering matrix).   These results improve asymptotically, in several different ways, over results of Srinivasan (2006) and Kolliopoulos \& Young (2005). 
  
We show nearly-matching inapproximability and integrality-gap lower bounds.  We also show that the rounding process leads to negative correlation among the variables, which allows us to handle multi-criteria programs.
\end{abstract}

\setcounter{algorithm}{0}
\section{Introduction}
We consider \emph{covering integer programs} (CIPs), which are a class of optimization problems with $n$ variables $x_1, \dots, x_n \in \mathbb Z_{\geq 0}$ and $m$ \emph{covering constraints} of the form:
$$
\sum_i A_{ki} x_i \geq a_k \qquad \text{for $k = 1, \dots, m$}
$$
By appropriate scaling, we assume that each $A_k$ is a vector in $[0,1]^n$. We may optionally have constraints on the size of the variables, namely, that we require $x_i \leq d_i$ for some given values $d_i \in \mathbb Z_{\geq 0} \cup \{ \infty \}$; these are referred to as the \emph{multiplicity constraints}. Our goal is to minimize $C \bullet x$ subject to these constraints for some given vector $C \in \mathbb R_{\geq 0}^n$, where $\bullet$ represents the dot product. The optimal solution for the given instance is denoted by $\text{OPT}$.

This generalizes the set cover problem, which can be viewed as a special case with $a_k = 1$ and $A_{ki} \in \{0,1 \}$. Since solving set cover exactly is NP-hard~\cite{Karp72}, we aim instead for an approximation algorithm. Here, there are at least three types of approximations we can use:
\begin{enumerate}
\item the solution $x$ may violate the optimality constraint: i.e., $C \bullet x > \text{OPT}$. 
\item $x$ may violate the multiplicity constraint: i.e., $x_i > d_i$ for some $i$;
\item $x$ may violate the covering constraints: i.e., $\sum_i A_{ki} x_i < a_k$ for some $k$.
\end{enumerate}

For our purposes, we will demand that our solution $x$ exactly satisfies the covering constraints, and we seek to satisfy the multiplicity constraints and optimality constraint as closely as possible. For the optimality constraint, we will ensure that
$$
C \bullet x \leq \beta \times \text{OPT}
$$
for some parameter $\beta \geq 1$ referred to as the \emph{approximation ratio}. 

Many approximation algorithms for set cover and its extensions give approximation ratios as a function of the total number of constraints $m$: e.g.,  the greedy algorithm has approximation ratio $(1 - o(1)) \ln m$~\cite{slavik}.  We often prefer a \emph{scale-free} approximation ratio, that does not depend on the problem size but only on its structural properties.  Two cases of particular interest are when the matrix $A$ is \emph{row-sparse} (each constraint involves a bounded number variables); or \emph{column-sparse} (each variable appears in a bounded number of constraints.) In this paper, we will be concerned solely with the column-sparse setting; this corresponds to set-cover instances where each set is small. The row-sparse setting, which generalizes problems such as vertex cover, typically leads to very different types of algorithms.

Two important parameters used to measure the column sparsity of such systems are the maximum $\ell_0$ and $\ell_1$-norms of the columns; that is,
$$
\Delta_0 = \max_{\text{columns $i$}} { \# \text{rows $k$ with $A_{ki} > 0$ }}, \qquad \Delta_1 = \max_{\text{columns $i$}} \sum_{\text{rows $k$}} A_{ki}
$$
Since the entries of $A$ have been normalized to the range $[0,1]$, we have $\Delta_1 \leq \Delta_0$; it is also possible that $\Delta_1 \ll \Delta_0$. 

Approximation algorithms for column-sparse CIPs typically fall into two main classes. First, there are greedy algorithms which start by setting $x = 0$, then increment $x_i$ in some manner which ``looks best'' in a myopic way for the residual problem. These were first developed by~\cite{chvatal, johnson, lovasz} for set cover, and later analysis (see~\cite{feige}) showed that they give essentially optimal approximation ratios for set cover. These were extended to CIPs in~\cite{fisher-wolsey, dobson} with approximation ratio $1 + \ln \Delta_1$. 

An alternative, and often more flexible, class of approximation algorithms is based on \emph{LP relaxation}. The simplest relaxation, which we call the \emph{basic LP},  has the same covering constraints as the original CIP, but replaces the constraint $x \in \mathbb Z_{\geq 0}^n$ with the weaker constraint $x \in \mathbb R_{\geq 0}^n$. The optimal fractional solution $\hat x$ to this polytope satisfies $C \bullet \hat x \leq \text{OPT}$. It thus suffices to turn the solution $\hat x$ into a integral solution $x$ with
$C \bullet x \leq \beta (C \bullet \hat x)$. We will also see some stronger LP formulations, such as the Knapsack-Cover (KC) inequalities. These relaxations can be solved using general-purpose LP solvers, or faster, specialized algorithms tailored for CIP (such as~\cite{quanrud, wang, young2014nearly}). Alternatively, in some cases the basic LP has a generic solution, for example by setting $x$ to be a constant vector. 

We will mostly ignore the issue of how to solve the LP relaxation, and focus on how to transform it into an integral solution.  \emph{Randomized rounding} is often employed for this step. The simplest scheme, first applied to this context by~\cite{RT87}, is to simply draw $x_i$ as \emph{independent} $\text{Bernoulli}(\alpha \hat x_i)$, for some $\alpha > 1$. This leads to an approximation ratio of $1 + O\Big( \frac{\log m}{\amin} + \sqrt{ \frac{\log m}{\amin}} \Big)$. As is typical of randomized rounding algorithms, the conversion from the fractional to the integral solution does not depend on the specific objective function. In this sense, it is ``oblivious'', yielding a good expected value for any objective function.

In~\cite{Srin06}, Srinivasan gave an alternative randomized rounding algorithm with an approximation ratio of  $1 + O \Big( \frac{\log \Delta_0)}{\amin} + \sqrt{\frac{\log \amin}{\amin} + \frac{\log \Delta_0}{\amin}} \Big)$ for systems without multiplicity constraints. The randomization argument in this construction, which is based on the FKG inequality and some proof ideas behind the Lov\'{a}sz Local Lemma (LLL). only shows that  the desired approximation ratio holds with some exponentially small (but positive) probability. There is an additional derandomization step,  using the method of conditional expectations, to turn it into an efficient algorithm.

 In~\cite{KY05}, Kolliopoulos \& Young adapted the algorithm of~\cite{Srin06} for multiplicity constraints. Their main  algorithm meets the multiplicity constraints exactly with approximation ratio $O(\log \Delta_0)$. They also have an alternate algorithm which violates each multiplicity constraint ``$x_i \leq d_i$" to at most ``$x_i \leq \lceil (1+\epsilon) d_i \rceil$" for arbitrary input parameter $\epsilon$; we refer to this situation as \emph{$\epsilon$-respecting multiplicity}. In this case, the approximation ratio is $O\bigl(1 + \frac{\log \Delta_0}{\amin \thinspace \epsilon^2}\bigr)$. 

Finally, we note that since the original version of this paper,  subsequent work of Chekuri \& Quanrud \cite{cq} has developed algorithm based on novel methods of solving and rounding the KC linear program. In addition, their algorithm can be derandomized efficiently.  

\subsection{Our contributions}
We present new randomized rounding schemes, based on a variant of the LLL developed in~\cite{HS13}.  (See Appendix~\ref{lll-compar-sec} for a more detailed comparison between this algorithm and the LLL.) The approximation ratios for our algorithms will be stated in terms of the key parameter $\gamma$ defined as
$$
\gamma = \frac{\ln(\Delta_1+1)}{\amin}
$$
where we define $\amin = \max(0, \min_k a_k)$.  Formally, we show the following result:
\begin{theorem}
\label{thm11}
Let $\hat x$ be a fractional solution for the basic LP. Our randomized algorithm runs in expected linear time, and generates a solution $x \in \mathbb Z_{\geq 0}^n$ satisfying the covering constraints with probability one, and with
$$
\bE[x_i] \leq \hat x_i \bigl( 1 + \gamma  + 10 \ln(1 + \sqrt{\gamma})  \bigr), \qquad \qquad
x_i \leq \left\lceil \hat x_i \cdot \frac{2 \gamma}{\ln(1+\gamma)} \right \rceil \text{ with probability one}
$$
\end{theorem}

This automatically implies that $\bE[C \bullet x] \leq \beta C \bullet \hat x \leq  \beta \times \text{OPT}$ for $\beta = 1 + \gamma + 10 \ln(1+ \sqrt{\gamma})$.
Our algorithm has several advantages over previous techniques.
\begin{enumerate}
\item We get approximation ratios in terms of $\Delta_1$, the maximum $\ell_1$-norm of the columns of $A$.
\item When $\Delta_1$ is small, our approximation ratio is asymptotically smaller than that of~\cite{Srin06}. In particular, we avoid the $\sqrt{\frac{\log \amin}{\amin}}$ term in our approximation ratio.
\item When $\Delta_1$ is large, then our approximation ratio is roughly $\gamma$; this is asymptotically optimal (including having the correct coefficient), and improves on~\cite{Srin06}.
\item This algorithm is quite efficient, essentially as fast as reading in the matrix $A$.
\item The algorithm is oblivious to the objective function --- although it achieves a good approximation factor for any objective $C$, the algorithm itself does not use $C$ in any way.
\end{enumerate}

To briefly explain the role of parameter $\gamma$, note that CIP problems coming from set cover have $\amin = 1$ and $\Delta_0 = \Delta_1$, and that the hardness of approximation is roughly $\ln \Delta_0$. Furthermore, for general CIP instances, the matrix $A$ and vector $a$ can be rescaled by a constant; to compensate for this scaling, the approximation factor should be inversely proportional to $\amin$. Overall, this means that the approximation ratio should be roughly equal to $\gamma$ (at least up to first-order terms).

Our partial resampling algorithm can be modified so that multiplicity constraints are satisfied or nearly-satisfied, at the cost of a worsened approximation ratio. These results, which improve in all cases over the corresponding approximation ratios of~\cite{KY05}, are summarized as follows:
\begin{theorem}
There is a randomized polynomial-time algorithm to obtain a solution $x \in \mathbb Z_{\geq 0}^n$ which satisfies the covering and multiplicity constraints, and which has
$$
C \bullet x \leq (\ln \Delta_0 + O(\log \log \Delta_0)) \text{OPT}
$$
\end{theorem}
\begin{theorem}
\label{thm13}
Let $\hat x$ be a fractional solution for the basic LP. For any given $\epsilon \in (0,1]$, there is a rounding algorithm which generates a solution $x \in \mathbb Z_{\geq 0}^n$ satisfying the covering constraints, and which has
\begin{align*}
\bE[x_i] \leq \hat x_i (1 + \epsilon + 4 \gamma/\epsilon), \qquad x_i \leq \lceil \hat x_i (1+\epsilon) \rceil \text{ with probability one}
\end{align*}
\end{theorem}

We show a number of nearly-matching lower bounds. The formal statements of these results contain numerous qualifiers and technical conditions, but we summarize these here.
\begin{enumerate}
\item Any polynomial-time algorithm for CIP with multiplicity constraints must have approximation ratio $\ln \Delta_0 - O(\log \log \Delta_0)$.
\item Any polynomial-time algorithm for CIP without multiplicity constraints, whose approximation ratio is a function $f(\gamma)$, must have $f(\gamma) \geq \max(\gamma, 1 + \gamma/2)$.
\item For large $\gamma$, the integrality gap between the basic LP and integral solutions which $\epsilon$-respect multiplicity, is of order $\Omega(\gamma/\epsilon)$.
\item The basic LP has integrality gap at least $\max(\gamma, 1 + \gamma/2)$.
\end{enumerate}

Finally, we show that the values of $x_i$ generated by our algorithm have a form of negative correlation. This allows us to solve CIP instances with multiple objective functions $C_1, \dots, C_r$ ``for free'' --- due to concentration, each $\ell$ satisfies $C_{\ell} \bullet x \approx C_{\ell} \bullet \hat x$ with high probability and in particular there is a good probability that $C_{\ell} \bullet x \approx C_{\ell} \bullet \hat x$ simultaneously for all $\ell$. 
\begin{theorem}[Informal]
Suppose that a CIP instance with fractional solution $\hat x$ has $r$ objective functions $C_1, \dots, C_r$, whose entries are in $[0,1]$ and such that $C_{\ell} \bullet \hat x \geq \Omega(\log r)$ for all $\ell = 1, \dots, r$. Then,  with probability at least $1/2$, the solution $x$ generated by the rounding algorithm satisfies
$$
\forall \ell \qquad C_{\ell} \bullet x \leq \beta (C_{\ell} \bullet \hat x) + O( \sqrt{ \beta (C_{\ell} \bullet \hat x) \log r})
$$
where $\beta = 1 + \gamma + 10 \ln(1+ \sqrt{\gamma})$. 
\end{theorem}

\subsection{Outline}
In Section~\ref{sec:round}, we develop a randomized rounding algorithm, which takes a solution $\hat x$ for the basic LP and generates a vector $x \in \mathbb Z_{\geq 0}^n$ satisfying $\bE[x_i] \leq \rho  \hat x_i$ for a (rather complicated) approximation factor $\rho$. The precise formula will be described later.

In Section~\ref{a1column-sparse}, we simplify our approximation ratios for CIP without multiplicity constraints in terms of the key parameter $\gamma = \frac{\ln (\Delta_1+1)}{\amin}$.

In Section~\ref{resp-multiplicity-sec}, we extend these results to respect the multiplicity constraint. This requires parameterizing in terms of $\Delta_0$, and uses the stronger KC relaxation.

In Section~\ref{a1lb-sec}, we show a number of lower bounds for approximation ratios,  showing that the approximation ratios developed in Section~\ref{a1column-sparse} are essentially optimal for most parameters ranges (particularly when $\ln \Delta_1 \gg \amin$). We show both algorithmic hardness results and integrality gaps of the basic LP.

In Section~\ref{a1multi-crit-sec}, we show that our randomized rounding scheme obeys a negative correlation property. This allows us to show concentration bounds for the objective functions $C_{\ell} \bullet x$, which in turn allows us to give  approximation schemes in the presence of multiple objective functions.

\subsection{Notation}

We write $[t]$ for the set $\{1, \dots, t \}$. We use Iverson notation, where for a Boolean predicate $\mathcal P$ we have $[[ \mathcal P ]] = 1$ if $\mathcal P$ is true and zero otherwise. For vectors $x, y$ we write $x \leq y$ if $x_i \leq y_i$ for all indices $i$; otherwise, we write $x \not \leq y$. Note that, with this notation, our goal for solving a CIP instance is to satisfy $A x \geq a$.

The number of non-zero entries in $A$ is denoted by $\nnz(A)$; in general, we can store and process $A$ in $O(\nnz(A))$ time.


\section{The rounding algorithm}
\label{sec:round}
We first consider, in Section~\ref{a1relax}, the case when all the values of $\hat x$ are small; this will turn out to be the critical case for understanding the approximation ratio. Section~\ref{a1round} will extend the analysis to arbitrary values of $\hat x$ by a deterministic quantization method. 
\subsection{The case when all entries of $\hat x$ are small}
\label{a1relax}
For the purposes of Section~\ref{a1relax}, we assume that we have fixed certain parameters $\sigma \in [0,1]$ and $\alpha > \frac{-\ln(1-\sigma)}{\sigma}$; we will discuss later how to set these parameters.  We also assume that $\hat x \in [0, 1/\alpha]^n$ and $A \hat x \geq a$. Under these assumptions, we use Algorithm~\ref{algo:alg1}, named \emph{RELAXATION}:
\begin{algorithm}[H]
\centering
\begin{algorithmic}[1]
\Function{relaxation}{$\hat{x}$, $A$, $a$, $\sigma$, $\alpha$}
\State \textbf{for} {$i = 1, \dots, n$} \textbf{do}  $x_i \sim \text{Bernoulli} ( \alpha \hat{x}_i )$
        \While{$A x \not \geq a$} \Comment{The covering constraints are not all satisfied}
        \State{Let $k$ be minimal such that $A_k \bullet x < a_k$}
		\For {$i$ from $1, \dots, n$}
			\If {$x_i  = 0$}  $x_i \sim \text{Bernoulli} ( \sigma A_{ki} \alpha \hat x_i)$
			\EndIf
		\EndFor
	\EndWhile
	\State \Return $x$
        \EndFunction
\end{algorithmic}
\caption{The RELAXATION algorithm}
\label{algo:alg1}
\end{algorithm}

Note that our size assumptions imply that lines 2 and 6 are valid probability distributions, i.e. $\alpha \hat x_i \in [0,1]$ and $\sigma A_{ki} \alpha \hat x_i \in [0,1]$.
 The RELAXATION algorithm only increments the variables $x_i$, and so the algorithm terminates with probability one. Our main technical result will be to show that the expected value of the variable $x_i$ at the termination is not much larger than the fractional solution value $\hat x_i$.   Formally, we will show the following:

\begin{theorem}
\label{a1exi-thm}
Suppose that our assumptions on the parameters $\hat x, \alpha, \sigma$ are satisfied. Then for any $i \in [n]$, the probability that $x_i = 1$ at the conclusion of the \textup{RELAXATION} algorithm is at most
$$
\Pr( x_i = 1) \leq \alpha \hat x_i \Bigl( 1 + \sigma \sum_{k} \frac{A_{ki}}{e^{\sigma \alpha A_k \bullet \hat x} (1-\sigma)^{a_k} - 1} \Bigr)
$$
\end{theorem}

We will need many intermediate results before we prove this theorem. Throughout this section, we define $p_i = \alpha \hat x_i$ and $q_i = 1 - p_i$.
 
 Whenever we encounter an unsatisfied constraint $k$ and draw new values for the variables (line 6), we refer to this as \emph{resampling} the constraint $k$.  There is an alternative and equivalent view of the resampling procedure, which seems counter-intuitive but will be crucial for our analysis. 
Instead of setting each variable $x_i = 1$ with probability $\sigma A_{ki} \alpha \hat x_i$, we instead imagine selecting a subset $Z \subseteq [n]$,  where each $i$ currently satisfying $x_i = 0$ goes into $Z$ independently with probability $\sigma A_{ki}$. Then, for each variable $i \in Z$, we draw $x_i \sim \text{Bernoulli}(p_i)$. In this interpretation, we say that $Z$ is the \emph{resampled set} for constraint $k$, and if $i \in Z$ we say that variable $i$ is \emph{resampled}.

\begin{lemma}
\label{a1witness-tree-lemma}
Let $Z_1, \dots, Z_j$ be subsets of $[n]$. The probability that the first $j$ resampled sets for constraint $k$ are respectively $Z_1, \dots, Z_j$ is at most $\prod_{\ell=1}^j f_k(Z_{\ell})$, where we define
$$
f_k(Z) = (1 - \sigma)^{-a_k} \Bigl( \prod_{i \in [n] - Z} 1 - A_{ki} \sigma \Bigr) \Bigl( \prod_{i \in Z} q_i A_{ki}\sigma   \Bigr)
$$
\end{lemma}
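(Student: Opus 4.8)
The plan is to reveal the algorithm's random choices in the order the algorithm consumes them, to split the target product $\prod_l f_k(Z_l)$ into a ``selection factor'' and a ``re-draw factor'', and to bound these separately before recombining. Recall the two-part view of a resampling of constraint $k$: first a subset $Y$ of the currently-zero variables is formed, each such $i$ entering $Y$ independently with probability $\sigma A_{ki}$; then each $i \in Y$ is re-drawn as $\mathrm{Bernoulli}(p_i)$. Write $E$ for the event in the statement, so that $E = \bigcap_{l=1}^{j} \{Y_l = Z_l\}$, where $Y_l$ is the set formed at the $l$-th resampling of $k$ (with $\{Y_l = Z_l\}$ deemed false if $k$ is resampled fewer than $l$ times). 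Note that $f_k(Z) = \bigl[ (1-\sigma)^{-a_k} \prod_{i\in[n]} (1-\sigma A_{ki}) \prod_{i\in Z} \tfrac{\sigma A_{ki}}{1-\sigma A_{ki}} \bigr] \cdot \prod_{i\in Z}(1-p_i)$, a product of the selection factor (the bracket) and the re-draw factor $\prod_{i\in Z}(1-p_i)$.

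\emph{Selection factor.} Condition on the history up to just before the $l$-th resampling of $k$; this determines the current zero-set $S_l$. When $Z_l \subseteq S_l$ the probability that the set formed is exactly $Z_l$ equals $\prod_{i\in Z_l}\sigma A_{ki}\prod_{i\in S_l\setminus Z_l}(1-\sigma A_{ki})$, and it is $0$ otherwise. Since the constraint is unsatisfied at that moment and all variables are $\{0,1\}$-valued, $\sum_{i\notin S_l}A_{ki} < a_k$; combined with the elementary bound $1-\sigma t \ge (1-\sigma)^{t}$ for $t\in[0,1]$ this gives $\prod_{i\notin S_l}(1-\sigma A_{ki}) \ge (1-\sigma)^{\sum_{i\notin S_l}A_{ki}} > (1-\sigma)^{a_k}$. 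Rearranging the products (using $Z_l\subseteq S_l$), the conditional probability above is therefore at most $(1-\sigma)^{-a_k}\prod_{i\in[n]}(1-\sigma A_{ki})\prod_{i\in Z_l}\tfrac{\sigma A_{ki}}{1-\sigma A_{ki}} = f_k(Z_l)/\prod_{i\in Z_l}(1-p_i)$. Telescoping this estimate along the natural filtration of the execution (a tower-of-expectations argument) gives $\Pr[E] \le \frac{\prod_{l=1}^{j} f_k(Z_l)}{\prod_{l=1}^{j}\prod_{i\in Z_l}(1-p_i)}$, which falls short of the claim by exactly the product of the re-draw factors.

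\emph{Re-draw factor.} To recover that product I would show that $E$ forces many re-draws to return $0$: because variables only move from $0$ to $1$, and a variable equal to $1$ is never resampled again, a variable that is resampled by $k$ for the $r$-th time has been drawn at least $r$ times by that point and, being $0$ at that instant, every one of those draws returned $0$. Hence, writing $s_i := |\{l\le j : i\in Z_l\}|$, the event $E$ is contained in the \emph{fixed} event $W$ that, for every $i$, the first $s_i$ of the independent $\mathrm{Bernoulli}(p_i)$ draws used for $x_i$ all return $0$; moreover $\Pr[W] = \prod_i (1-p_i)^{s_i} = \prod_{l=1}^{j}\prod_{i\in Z_l}(1-p_i)$. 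Since $W$ constrains only the re-draw values and not the subset-selection randomness, the selection-factor estimate goes through verbatim under the conditional law $\Pr[\cdot \mid W]$, and the same telescoping yields $\Pr[E\mid W] \le \frac{\prod_{l} f_k(Z_l)}{\prod_{l}\prod_{i\in Z_l}(1-p_i)}$. Multiplying, $\Pr[E] = \Pr[W]\,\Pr[E\mid W] \le \prod_{l=1}^{j} f_k(Z_l)$, as required.

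The one genuinely delicate point --- and thus the main obstacle --- is this interplay between the two sources of randomness. The $(1-p_i)$ savings come from re-draws that occur at execution-dependent moments scattered through the run (for a variable's \emph{first} appearance in some $Z_l$ the relevant draw even predates that resampling of $k$), so they cannot simply be harvested ``at the $l$-th resampling of $k$''. Isolating them as the selection-independent event $W$ and conditioning on it is what makes the accounting go through; verifying $E \subseteq W$ is precisely where monotonicity of the variables --- and the fact that a $1$-valued variable is never touched again --- is essential. The remaining ingredients (the convexity inequality, and the telescoping of conditional expectations) are routine.
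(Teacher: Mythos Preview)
Your proof is correct and takes a genuinely different route from the paper's. The paper proceeds by an induction on a time horizon $T$, quantified over \emph{all} starting vectors $v\in\{0,1\}^n$: it shows that, starting from any $v$, the probability that the first $j$ resampled sets for $k$ equal $Z_1,\dots,Z_j$ within $T$ steps is at most $\bigl(\prod_{l} f_k(Z_l)\bigr)/\prod_{i\in Z_1\cup\cdots\cup Z_j}(1-p_i)$, the denominator running over the \emph{union} of the $Z_l$. The missing $(1-p_i)$ factors are then recovered in one shot from the initial Bernoulli draw of $x$. By contrast, you keep the actual initial distribution throughout and instead split the randomness into subset-selection coins and a pre-generated stream of Bernoulli-$p_i$ draws: the stopping-time / tower argument handles the former and yields the (weaker) denominator $\prod_l\prod_{i\in Z_l}(1-p_i)$, while conditioning on the fixed event $W$ that the first $s_i$ draws of each $x_i$ are zero recovers exactly that product. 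The containment $E\subseteq W$ is where you use monotonicity, and the observation that the subset-selection bounds are pointwise in the current zero-set (hence unaffected by conditioning on the $B$-stream) is what lets the telescoping survive under $\Pr[\cdot\mid W]$; both points are correctly identified.

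As to what each approach buys: the paper's ``for every starting vector $v$'' induction is more mechanical but packages the statement in a form that is reused directly in the multi-criteria extension (their Proposition~\ref{wit-prop2}), where one again peels off one resampling at a time from an arbitrary state. Your decomposition is more transparent about why the $(1-p_i)$ savings appear with the right multiplicity and avoids the auxiliary induction entirely; it would also adapt to the multi-criteria setting, but the event $W$ and the telescoping would need to be re-stated for the more elaborate witness structure there. Both arguments rest on the same convexity bound $1-\sigma t\ge(1-\sigma)^t$ to turn the unsatisfied constraint $\sum_{i\notin S_l}A_{ki}<a_k$ into the $(1-\sigma)^{-a_k}$ factor.
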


The proof of Lemma~\ref{a1witness-tree-lemma} is based on an intricate induction argument; let us first provide some high-level intuition for the formula.\footnote{The structure of this proof is similar to analyses in \cite{HS13, H15} for certain variants of the Moser-Tardos algorithm for the LLL. Intuitively, the list of sets $Z_1, \dots, Z_j$ should be thought of as part of a ``witness tree'' for the event that $x_i = 1$. Each such witness tree provides a history of all relevant resamplings for that variable $i$. To bound the probability of a bad event, we thus take a union bound over witness trees. See Appendix~\ref{lll-compar-sec} for further details.} Consider the probability of resampling sets $Z_1, \dots, Z_j$ for constraint $k$. At the $\ell^{\text{th}}$ resampling of constraint $k$, the following events will need to hold:
\begin{enumerate}
\item All the variables $i \in Z_{\ell}$ must have $x_i = 0$; the probability this occured the last time they were resampled (or were sampled initially) is $\prod_{i \in Z_{\ell}} q_i$.
\item All the variables $i \in Z_{\ell}$ must have been selected to go into the resampled subset; this has probability $\prod_{i \in Z_{\ell}} \sigma A_{ki}$.
\item All the variables $i \notin Z_{\ell}$ with $x_i = 0$ must \emph{not} be selected to go into the resampled set. This has probability $\prod_{i: x_i = 0 } (1 -\sigma A_{ki} \sigma)$. 
\end{enumerate}
The first two terms here contribute $\prod_{i \in Z_{\ell}} q_i A_{ki} \sigma$. We also know that, since constraint $k$ was violated, we must have $\sum_{i} A_{ki} x_i < a_k$. Using this fact, we can show that the third term is at most $(1 - \sigma)^{-a_k} \prod_{i \in [n] - Z_{\ell}} 1 - A_{ki}$.  Overall, multiplying the three terms, we see that the probability of resampling $Z_{\ell}$ at time $\ell$ is at most $f_k(Z_{\ell})$.

\begin{proof}[Proof of Lemma~\ref{a1witness-tree-lemma}]
For any integer $T \geq 0$, any list of sets $Z_1, \dots, Z_j \subseteq [n]$ and any vector $v \in \{0, 1 \}^n$, we define the following random process and the following event ${\mathcal E}(T; Z_1, \dots, Z_j; v)$: instead of drawing $x \sim \text{Bernoulli}(\alpha \hat x_i)$ as in line 2 of RELAXATION, we set $x = v$, and we continue the remaining steps of the RELAXATION algorithm until done. We say that, in this process, event ${\mathcal E}(T; Z_1, \dots, Z_j; v)$ has occurred if:
\begin{enumerate}
\item There are less than $T$ total resamplings,
\item There are at least $j$ resamplings of constraint $k$,
\item The first $j$ resampled sets for constraint $k$ are respectively $Z_1, \dots, Z_j$.
\end{enumerate}

We claim now that for any $Z_1, \dots, Z_j$, and $v \in \{0, 1 \}^n$, and any integer $T \geq 0$, we have
\begin{equation}
\label{wtl1}
\Pr( {\mathcal E}(T; Z_1, \dots, Z_j; v) ) \leq \frac{\prod_{\ell=1}^j f_k(Z_{\ell})}{\prod_{i \in Z_1 \cup \dots \cup Z_j} q_i}
\end{equation}

We shall prove Eq.~(\ref{wtl1}) by induction on $T$. The base case $T = 0$ holds trivially, because ${\mathcal E}(T; Z_1, \dots, Z_j; v)$ is impossible (there must be at least $0$ resamplings), and so the LHS of Eq.~(\ref{wtl1}) is zero while the RHS is non-negative. We move on to the induction step.

If $A v \geq a$, then there are no resamplings. Thus, if $j \geq 1$, then event $ {\mathcal E}(T, j, Z_1, \dots, Z_j;v)$ is impossible and again Eq.~(\ref{wtl1}) holds. On the other hand, if $j = 0$, then the RHS of (\ref{wtl1}) is equal to one, and again this holds vacuously. So we suppose $A v \not \geq a$ and $j \geq 1$; let $k'$ be minimal such that $A_{k'} \bullet v < a_{k'}$. Then the first step of RELAXATION is to resample constraint $k'$. Let the random variable $x'$ denote the value of the variables after this resampling.

If $v_i = 1$ for any $i \in Z_1 \cup \dots \cup Z_j$, then the event ${\mathcal E}(T; Z_1, \dots, Z_j, v)$ is impossible. This is because we only resample variables which are equal to zero; thus variable $i$ can never be resampled for the remainder of the RELAXATION algorithm. In this case Eq.~(\ref{wtl1}) holds vacuously. So we may assume that $v_i = 0$ for all $i \in Z_1 \cup \dots \cup Z_j$.

Now, suppose that $k' \neq k$. Then after the first resampling, the event ${\mathcal E}(T; Z_1, \dots, Z_j;v)$ becomes equivalent to the event ${\mathcal E}(T-1; Z_1, \dots, Z_j, x')$. By our inductive hypothesis, if we condition on a fixed value of $x'$ we have
\begin{align*}
\Pr( {\mathcal E}(T; Z_1, \dots, Z_j;v) \mid x') &= \Pr( {\mathcal E}(T-1; Z_1, \dots, Z_j, x') ) \leq \frac{\prod_{\ell=1}^j f_k(Z_\ell)}{\prod_{i \in Z_1 \cup \dots \cup Z_j} q_i}.
\end{align*}
Integrating out $x'$ immediately gives Eq.~(\ref{wtl1}).  

Next, suppose that $k = k'$. Observe that the following are necessary events for ${\mathcal E}(T; Z_1, \dots, Z_j, v)$:
\begin{enumerate}
\item[(A1)] The first resampled set $Y$ for constraint $k' = k$ is equal to $Z_1$.
\item[(A2)] For any $i \in Z_1 \cap (Z_2 \cup \dots \cup Z_j)$, in the first resampling step (which includes variable $i$), we draw $x_i = 0$.
\item[(A3)] ${\mathcal E}(T-1; Z_2, Z_3 \dots, Z_j; x')$
\end{enumerate}

The condition (A2) follows from the observation, made earlier, that ${\mathcal E}(T-1; Z_2, Z_3 \dots, Z_j; x')$ is impossible if $x'_i = 1$ for  $i \in Z_2 \cup \dots \cup Z_j$ hold. Any such $i \in Z_1$ must be resampled (due to condition (A1)), and it must be resampled to become equal to zero.

Let us first bound the probability of the condition (A1). Since  $v_i = 0$ for all $i \in Z_1$, we have
\begin{align*}
\Pr(Y = Z_1) &=  \prod_{i \in Z_1} A_{ki} \sigma  \prod_{\substack{i \notin Z_1 \\  v_i = 0}} (1 - A_{ki} \sigma) = \prod_{i \in [n] - Z_1} (1 - A_{ki} \sigma) \prod_{i \in Z_1} A_{ki} \sigma  \prod_{i: v_i = 1} (1 - A_{ki} \sigma)^{-1}
\end{align*}

By definition of $k'$, we have $A_{k} \bullet v < a_k$ and so $\prod_{i: v_i = 1} (1 - A_{ki} \sigma)^{-1} \leq (1 - \sigma)^{-a_k}$, further implying:
$$
\Pr(Y = Z_1) \leq (1 - \sigma)^{-a_k} \prod_{i \in [n] - Z_1} (1 - A_{ki} \sigma) \prod_{i \in Z_1} A_{ki} \sigma
$$

Next, let us consider the probability of (A2). For each $i \in Y$ we draw $x_i \sim \text{Bernoulli}(p_i)$; thus, the total probability of event (A2), conditional on (A1), is at most $\prod_{i \in Z_1 \cap (Z_2 \cup \dots \cup Z_j)} q_i$.

For (A3), note that the event  ${\mathcal E}(T-1; Z_2, Z_3 \dots, Z_j; x')$ is conditionally independent of events (A1) and (A2), given $x'$. We integrate over $x'$ and use the induction hypothesis to get:
\begin{align*}
\Pr( (A3) \mid (A1), (A2)) &= \sum_{v' \in \{0,1 \}^n} \Pr( {\mathcal E}(T-1; Z_2, \dots, Z_j; v')) \Pr(x' = v') \\
&\negthickspace \negthickspace \negthickspace \leq \sum_{v' \in \{0,1 \}^n} \frac{\prod_{\ell=2}^j f_k(Z_{\ell})}{\prod_{i \in Z_2 \cup \dots \cup Z_j} q_i} \Pr(x' = v') = \frac{\prod_{\ell=2}^j f_k(Z_{\ell})}{\prod_{i \in Z_2 \cup \dots \cup Z_j} q_i}
\end{align*}

As (A1), (A2), and (A3) are necessary conditions for $\mathcal{E}(T, j, Z_1, \dots, Z_j, v)$, this shows that
{\allowdisplaybreaks
\begin{align*}
\Pr( \mathcal{E}(T, j, Z_1, \dots, Z_j, v) ) &\leq (1 - \sigma)^{-a_k} \prod_{i \in [n] - Z_1} (1 - A_{ki} \sigma) \prod_{i \in Z_1} A_{ki} \sigma   \prod_{i \in Z_1 \cap (Z_2 \cup \dots \cup Z_j)} q_i \times \frac{\prod_{\ell=2}^j f_k(Z_{\ell})}{\prod_{i \in Z_2 \cup \dots \cup Z_j} q_i} \\
&\qquad = (1 - \sigma)^{-a_k} \prod_{i \in [n] - Z_1} (1 - A_{ki} \sigma) \prod_{i \in Z_1} A_{ki} \sigma  \prod_{i \in Z_1} q_i \times \frac{\prod_{\ell=2}^j f_k(Z_{\ell})}{\prod_{i \in Z_1 \cup \dots \cup Z_j} q_i} \\
&\qquad = f_k(Z_1) \times \frac{\prod_{\ell=2}^j f_k(Z_{\ell})}{\prod_{i \in Z_1 \cup \dots \cup Z_j} q_i}
\end{align*}
}
and the induction claim again holds.

Thus Eq.~(\ref{wtl1}) holds for given sets $Z_1, \dots, Z_j$, and $v \in \{0, 1 \}^n$, and any integer $T \geq 0$. Let us define the event ${\mathcal E}(Z_1, \dots, Z_j; v)$ to be the event that, if we start the RELAXATION algorithm with $x = v$, then the first $j$ resampled sets for constraint $k$ are respectively $Z_1, \dots, Z_j$; we make no condition on the total number of resamplings. The events $\mathcal E(T; Z_1, \dots, Z_j; v)$ form an increasing chain with ${\mathcal E}(Z_1, \dots, Z_j; v) = \bigcup_{T=0}^{\infty} {\mathcal E}(T; Z_1, \dots, Z_j; v)$. So by countable additivity of probability,
\begin{align*}
\Pr( {\mathcal E}(Z_1, \dots, Z_j; v)) &= \lim_{T \rightarrow \infty} \Pr( {\mathcal E}(T; Z_1, \dots, Z_j; v)) \leq \frac{\prod_{\ell=1}^j f_k(Z_{\ell})}{\prod_{i \in Z_1 \cup \dots \cup Z_j} q_i}
\end{align*}

So far, we have computed the probability of having $Z_1, \dots, Z_j$ be the first $j$ resampled sets for constraint $k$, \emph{given that $x$ is fixed to an arbitrary initial value $v$}. We now can compute the probability that $Z_1, \dots, Z_j$ are the first $j$ resampled sets for constraint $k$ given that $x$ is drawn as independent $\text{Bernoulli}(p_i)$.

In the first step of the RELAXATION algorithm, we claim that a necessary event for $Z_1, \dots, Z_j$ to be the first $j$ resampled sets is to have $x_i = 0$ for each $i \in Z_1 \cup \dots \cup Z_j$; the rationale for this is equivalent to that for (A2). This event has probability $\prod_{i \in Z_1 \cup \dots \cup Z_j} q_i$. Subsequently the event ${\mathcal E}(j, Z_1, \dots, Z_j; x)$ must occur.

The probability of ${\mathcal E}(Z_1, \dots, Z_j; x)$, conditional on $x_i = 0$ for all $i \in Z_1 \cup \dots \cup Z_j$, is at most $\frac{\prod_{\ell=1}^j f_k(Z_\ell)}{\prod_{i \in Z_1 \cup \dots \cup Z_j} q_i}$ (by a similar argument to that of computing the probability of (A3) conditional on (A1), (A2)). Thus, the \emph{overall} probability that the first $j$ resampled sets for constraint $k$ are $Z_1, \dots, Z_j$ is at most
\[
  \prod_{i \in Z_1 \cup \dots \cup Z_j} q_i \times \frac{\prod_{\ell=1}^j f_k(Z_{\ell})}{\prod_{i \in Z_1 \cup \dots \cup Z_j} q_i} = \prod_{\ell=1}^j f_k(Z_{\ell}) \qedhere
\]
\end{proof}

Using this formula, we get the following useful estimates:
\begin{proposition}
\label{z-sum-prop}
For each constraint $k$, define the quantity 
$$
s_k =  (1 - \sigma)^{-a_k} e^{-\sigma \alpha A_k \bullet \hat x} < 1.
$$

For each $k \in [m]$ there holds $\displaystyle \sum_{Z \subseteq [n]} f_k (Z) \leq s_k$,

For each $k \in [m]$ and $i \in [n]$ there holds $\displaystyle \sum_{\substack{Z \subseteq [n] \\  Z \ni i}} f_k (Z) \leq s_k A_{ki} \sigma.$
\end{proposition}
\begin{proof}
Since $A \hat x \geq a$, we have $s_k = (1 - \sigma)^{-a_k} e^{-\sigma \alpha A_k \bullet \hat x} < (1 - \sigma)^{-a_k} e^{-\sigma  a_k \frac{-\ln(1-\sigma)}{\sigma}} = 1$.

For the first sum,  we have
\begin{align*}
  \sum_{Z \subseteq [n]} f_k(Z) &= \sum_{Z \subseteq [n]} (1 - \sigma)^{-a_k} \prod_{i \in [n] - Z} (1 - A_{ki} \sigma)  \prod_{i \in Z} q_i A_{ki}\sigma  = (1 - \sigma)^{-a_k} \prod_{i \in [n]} (1 - A_{ki} \sigma) + ( q_i A_{ki}\sigma ) \\
&= (1 - \sigma)^{-a_k} \prod_{i \in [n]} (1 - A_{ki} p_i \sigma) \leq (1 - \sigma)^{-a_k} e^{-\sigma \sum_i A_{ki} p_i} = (1 - \sigma)^{-a_k} e^{-\sigma \alpha A_k \bullet \hat x}
\end{align*}

For the second sum, we have:
{\allowdisplaybreaks
\begin{align*}
\sum_{\substack{Z \subseteq [n] \\ Z \ni i}} f_k(Z) &= \sum_{\substack{Z \subseteq [n] \\ Z \ni i}} (1 - \sigma)^{-a_k} \prod_{\ell \in [n] - Z} (1 - A_{k\ell} \sigma) \prod_{\ell \in Z} q_{\ell} A_{k\ell}\sigma \\
&= (1-\sigma)^{-a_k} q_i A_{k i}\sigma \prod_{\ell \in [n] - \{i \}} (1 - A_{k\ell} p_\ell \sigma) \\
&\leq (1-\sigma)^{-a_k} q_i A_{ki}\sigma e^{\sigma A_{ki} p_i} e^{-\sigma \alpha (A_{k} \bullet \hat x)} = s_k (1 - p_i) A_{ki}\sigma e^{\sigma A_{ki} p_i}
\end{align*}
}
Now note that $A_{ki} \leq 1, \sigma \leq 1$ and hence $(1 - p_i) e^{\sigma A_{ki} p_i} \leq 1$. 
\end{proof}

We are now prepared to prove Theorem~\ref{a1exi-thm}.

\begin{proof}[Proof of Theorem~\ref{a1exi-thm}]
There are two possible ways to have $x_i = 1$: either $x_i = 1$ at the initial sampling, or $x_i$ first becomes equal to one during the $j^{\text{th}}$ resampling of constraint $k$. The former event has probability $p_i$. If the latter event occurs, there must be sets $Z_1, \dots, Z_j$ such that:
\begin{enumerate}
\item[(B1)] The first $j$ resampled sets for constraint $k$ are respectively $Z_1, \dots, Z_j$
\item[(B2)] $i \in Z_j$
\item[(B3)] During the $j^{\text{th}}$ resampling of constraint $k$, we set $x_i = 1$.
\end{enumerate}

For any sets $Z_1, \dots, Z_j$ and $k \in [m]$, Lemma~\ref{a1witness-tree-lemma} shows that the probability that $Z_1, \dots, Z_j$ satisfy (B1) is at most $f_k(Z_1) \cdots f_k(Z_j)$. Since (B3) occurs after (B1), (B2) are determined, it has probability of $p_i$ conditional on (B1), (B2). Thus, for any fixed $Z_1, \dots, Z_j$, the probability that events (B1)--(B3) hold is at most $p_i f_k(Z_1) \cdots f_k(Z_j)$.

Thus, by a union bound over all $k \in [m]$ and sequences of sets $Z_1, \dots, Z_j \subseteq [n]$ with $i \in Z_j$, we have:
{\allowdisplaybreaks
\begin{align*}
\Pr( x_i = 1) &\leq p_i \Bigl( 1 + \sum_{k=1}^m \sum_{j=1}^{\infty} \sum_{\substack{ Z_1, \dots, Z_j \subseteq [n], Z_j \ni i}} f_k(Z_1) \cdots f_k(Z_{j}) \Bigr) \\
&\leq p_i \Bigl( 1 + \sum_{k=1}^m s_k A_{ki} \sigma \sum_{j=1}^{\infty} s_k^{j-1} \Bigr) \qquad \text{(Proposition~\ref{z-sum-prop})} \\
&= p_i \Bigl( 1 + \sum_k \frac{A_{ki} \sigma }{1 - s_k} \biggr) = \alpha \hat x_i \Bigl( 1 + \sigma \sum_{k} \frac{A_{ki}}{e^{\sigma \alpha A_k \bullet \hat x} (1-\sigma)^{a_k} - 1} \Bigr)  \qquad \text{as $s_k < 1$}  \qedhere
\end{align*}
}
\end{proof}

We can also use these estimates to bound the algorithm running time.
\begin{proposition}
\label{a1skprop}
The  expected number of resamplings of constraint $k$ made by the algorithm RELAXATION is at most $\frac{1}{e^{\sigma \alpha A_k \bullet \hat x} (1-\sigma)^{a_k} - 1}$.
\end{proposition}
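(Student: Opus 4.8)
The plan is to control the number of resamplings of each constraint separately and then sum. Every resampling step of RELAXATION resamples exactly one constraint $k$ (the minimal violated one), so if $R_k$ denotes the total number of times constraint $k$ is resampled, then the total number of resampling steps is $\sum_k R_k$. By linearity of expectation it therefore suffices to prove $\bE[R_k] \le \frac{1}{e^{\sigma \alpha A_k \cdot \hat x}(1-\sigma)^{a_k} - 1}$ for each fixed $k$.

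To do this, write $\bE[R_k] = \sum_{j \ge 1} P(R_k \ge j)$, valid since $R_k$ takes values in $\mathbf Z_+$. The event $\{R_k \ge j\}$ is the disjoint union, over all tuples $(Z_1, \dots, Z_j)$ of subsets of $[n]$, of the event ``the first $j$ resampled sets for constraint $k$ are $Z_1, \dots, Z_j$'': these events are pairwise disjoint (they record different outcomes of the same prefix) and their union is exactly $\{R_k \ge j\}$ (if constraint $k$ is resampled at least $j$ times, the first $j$ resampled sets are some well-defined tuple). By Lemma~\ref{a1witness-tree-lemma}, each such event has probability at most $\prod_{l=1}^j f_k(Z_l)$, so
$$
P(R_k \ge j) \le \sum_{Z_1, \dots, Z_j \subseteq [n]} \prod_{l=1}^j f_k(Z_l) = \Bigl( \sum_{Z \subseteq [n]} f_k(Z) \Bigr)^j \le s_k^j,
$$
where the last inequality is Proposition~\ref{z-sum-prop}, using the standing hypothesis $\alpha > \frac{-\ln(1-\sigma)}{\sigma}$, and $s_k = (1-\sigma)^{-a_k} e^{-\sigma \alpha A_k \cdot \hat x} < 1$. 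Summing the geometric series,
$$
\bE[R_k] \le \sum_{j \ge 1} s_k^j = \frac{s_k}{1 - s_k} = \frac{1}{1/s_k - 1} = \frac{1}{e^{\sigma \alpha A_k \cdot \hat x}(1-\sigma)^{a_k} - 1},
$$
and summing over $k$ gives the claim.

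The computation is essentially routine; the only points that need care are (i) checking that $\{R_k \ge j\}$ genuinely decomposes as the disjoint union over tuples of resampled sets, so that Lemma~\ref{a1witness-tree-lemma} can be applied term by term, and (ii) the mild finiteness concern that one should not presuppose the algorithm terminates — but the bound $P(R_k \ge j) \le s_k^j \to 0$ (valid because $s_k < 1$) shows $R_k < \infty$ almost surely and $\bE[R_k] < \infty$, which in turn shows that RELAXATION halts with probability one. Everything else reduces to the geometric-series manipulation above.
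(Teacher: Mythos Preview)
Your proof is correct and follows essentially the same argument as the paper: bound $P(R_k \ge j)$ by summing the Lemma~\ref{a1witness-tree-lemma} estimate over all tuples $(Z_1,\dots,Z_j)$, factor to get $s_k^j$ via Proposition~\ref{z-sum-prop}, and sum the geometric series. Your remarks on the disjoint-union decomposition and on termination are slightly more careful than the paper's presentation but do not change the argument.
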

\begin{proof}
 Using Lemma~\ref{a1witness-tree-lemma} and Proposition~\ref{z-sum-prop}, we get:
\begin{align*}
\Pr( \text{$\geq r$ resamplings} ) &\leq \sum_{Z_1, \dots, Z_r \subseteq[n]} \Pr(\text{$Z_1, \dots, Z_r$ are first resampled sets for constraint $k$}) \\
 &\leq \sum_{Z_1, \dots, Z_r \subseteq[n]} f_k(Z_1) \cdots f_k(Z_r)  \leq s_k^r
\end{align*}
The expected number of resamplings is thus at most  $\sum_{r=1}^{\infty} s_k^r = \frac{1}{1/s_k - 1} = \frac{1}{(1-\sigma)^{a_k} e^{\sigma \alpha A_k \bullet \hat x} - 1}$.
\end{proof}

\subsection{Extension to the case where $\hat x_i$ is large}
\label{a1round}
 In this section, we extend the rounding algorithm to an arbitrary vector $\hat x \in  \mathbb R_{\geq 0}^n$, removing our assumption that $\hat x_i \leq 1/\alpha$. We will construct a randomized process generating a vector $x \in \mathbb Z_{\geq 0}^n$, with the property that
\begin{equation}
\label{goal-e1}
\bE[x_i] \leq \alpha \hat x_i \Bigl( 1 + \sigma \sum_k \frac{A_{ki}}{e^{\sigma \alpha a_k} (1-\sigma)^{a_k} - 1} \Bigr)
\end{equation}

If our goal is \emph{solely} to achieve Eq.~(\ref{goal-e1}), without regard to the size of $x_i$, then there is a straightforward method: given a variable $i$, and a solution to the LP with fractional value $\hat x_i$, we sub-divide it into $N$ new variables $y_1, \dots, y_N$ with fractional values $\hat y_i = x_i/N$, for some arbitrarily large value $N$; we then set $x_i = y_1 + \dots + y_N$.

Unfortunately, with this subdivision step we may have $x_i$ as large as $N$. We also want to bound the \emph{maximum} (not just expected) size of $x_i$. To achieve this, we use a more careful subdivision step: we  subdivide a variable $i$ into two components, $\hat y_1, \hat y_2$, where $\hat y_2 \in [0,1/\alpha]^n$ and $\hat y_1$ is large. We then deterministically form $y_1$ by setting $y_1 = \gamma \hat y_1$, for some appropriate multiplier $\gamma$ and form $y_2$ by running RELAXATION on the residual problem (after removing the contribution of $y_1$).

For the formal construction, suppose now we are given some vector $\hat x \in \mathbb R_{\geq 0}^n$. For each variable $i$, let $v_i = \lfloor \hat x_i / \theta \rfloor$, where we define the critical threshold value
$$
\theta = \frac{-\ln(1 - \sigma)}{\alpha \sigma}
$$
We also define $F_i = \hat x_i - v_i \theta = \hat x_i \mod \theta$ and $G_i = [[ F_i \geq 1/\alpha ]]$ for each $i$.

We can form a residual problem by setting $a'_k =  a_k - \sum_{i} A_{ki} (G_i + v_i)$ and $\hat x'_i = F_i (1 - G_i)$; in particular, this satisfies the condition $\hat x' \in [0, 1/\alpha]^n$. We then run the RELAXATION algorithm on the residual problem. This is summarized in Algorithm 2, ROUNDING.

\begin{algorithm}[H]
\centering
\begin{algorithmic}[1]
\Function{ROUNDING}{$\hat{x}$, $A$, $\sigma$, $\alpha$}
\State Compute $a'_k =  a_k - \sum_{i} A_{ki} (G_i + v_i)$ for all $k$, and associated vector $\hat x_i = F_i (1 - G_i)$
        \State Compute $x' = \text{RELAXATION}(\hat x', A, a', \sigma, \alpha)$
        \State Return $x = G + v + x'$
        \EndFunction
\end{algorithmic}
\caption{The ROUNDING algorithm}
\end{algorithm}

The solution vector returned by the ROUNDING algorithm clearly satisfies the covering constraints $A x \geq a$. We also note the following useful properties:
\begin{proposition}
\label{a1simple-bound-prop}
For any $i \in [n]$ we have $\hat x_i - v_i \theta - G_i \theta \leq \hat x'_i \leq \hat x_i - v_i \theta - G_i/\alpha$
\end{proposition}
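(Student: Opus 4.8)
The statement is an elementary case analysis on the definitions of $v_i$, $F_i$, $G_i$, and $\hat x_i'$, so the plan is simply to unwind those definitions and check the two inequalities in each of the two cases $F_i < 1/\alpha$ and $F_i \geq 1/\alpha$. Recall that $F_i = \hat x_i - v_i \theta$ by definition, so in every case $\hat x_i - v_i\theta = F_i$; the only thing that varies is the value of $G_i$ and $\hat x_i'$.

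First I would handle the case $F_i < 1/\alpha$. Here $G_i = 0$ and $\hat x_i' = F_i = \hat x_i - v_i\theta$, so the claimed chain reads $\hat x_i - v_i\theta \leq \hat x_i - v_i\theta \leq \hat x_i - v_i\theta$ after substituting $G_i = 0$ on both sides; both inequalities are equalities, hence trivially true. (One should note in passing that $\theta \geq 1/\alpha$, which follows from $\theta = \frac{-\ln(1-\sigma)}{\alpha\sigma}$ together with the inequality $-\ln(1-\sigma) \geq \sigma$ valid for $\sigma \in [0,1)$; this is not needed in this first case but is reassuring for consistency of the construction.)

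Next, the case $F_i \geq 1/\alpha$. Here $G_i = 1$ and $\hat x_i' = 0$, so the left-hand side is $\hat x_i - v_i\theta - \theta = F_i - \theta$ and the right-hand side is $\hat x_i - v_i\theta - 1/\alpha = F_i - 1/\alpha$; we must show $F_i - \theta \leq 0 \leq F_i - 1/\alpha$. The right inequality $0 \leq F_i - 1/\alpha$ is immediate from the case hypothesis $F_i \geq 1/\alpha$. The left inequality $F_i \leq \theta$ holds because $F_i = \hat x_i \bmod \theta \in [0,\theta)$ by definition of the remainder, hence $F_i < \theta$. That closes both cases.

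There is essentially no obstacle here; the only thing to be careful about is bookkeeping the sign conventions and making sure the remainder $F_i$ genuinely lies in $[0,\theta)$ (which requires $\theta > 0$, i.e.\ $\sigma \in (0,1)$, as implicitly assumed throughout the section). So the ``main obstacle,'' such as it is, is merely to state the two cases cleanly and not to conflate the role of $G_i$ on the two sides of the inequality.
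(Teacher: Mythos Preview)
Your proof is correct and follows essentially the same case analysis as the paper's own proof: both split on whether $G_i=0$ or $G_i=1$, observe that the $G_i=0$ case is an equality throughout, and in the $G_i=1$ case use the bounds $1/\alpha \leq F_i < \theta$ to pin $\hat x_i'=0$ between the two endpoints. If anything, your write-up is a bit cleaner in explicitly invoking $F_i = \hat x_i \bmod \theta \in [0,\theta)$ for the left inequality.
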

\begin{proof}
If $G_i = 0$, then both of the bounds hold with equality. So suppose $G_i = 1$.  In this case, $1/\alpha \leq \hat x_i - v_i \theta \leq \theta$. So $\hat x_i - v_i \theta - G_i/\alpha \geq \theta - 1/\alpha \geq 0$ and
$\hat x_i - v_i \theta - G_i \theta \leq \theta - \theta = 0$ as required.
\end{proof}

\begin{proposition}
\label{a1easier-prop}
For any constraint $k$, we have $(1-\sigma)^{a'_k} e^{\sigma \alpha A_k \bullet \hat x'} \geq (1-\sigma)^{a_k} e^{\sigma \alpha a_k}$.
\end{proposition}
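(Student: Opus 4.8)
The natural move is to pass to the ratio of the two sides and show it is at least one; that is, I would show
\[
(1-\sigma)^{a'_k - a_k}\, e^{\sigma \alpha A_k \cdot (\hat x' - \hat x)} \;\geq\; 1 .
\]
From the definition of the residual right-hand side, $a'_k - a_k = -\sum_i A_{ki}(G_i + v_i)$, which is $\leq 0$ since $A_{ki}, G_i, v_i \geq 0$. For the exponential factor, the relevant input is the \emph{lower} bound on $\hat x'_i - \hat x_i$; by the left-hand inequality of Proposition~\ref{a1simple-bound-prop}, $\hat x'_i - \hat x_i \geq -(v_i + G_i)\theta$. Since $A_{ki} \geq 0$, summing gives $A_k \cdot (\hat x' - \hat x) \geq -\theta \sum_i A_{ki}(v_i + G_i)$, and since $\sigma\alpha > 0$ this yields $e^{\sigma\alpha A_k\cdot(\hat x' - \hat x)} \geq e^{-\sigma\alpha\theta \sum_i A_{ki}(v_i + G_i)}$.

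The crux is then the choice of $\theta$. By definition $\theta = \frac{-\ln(1-\sigma)}{\alpha\sigma}$, equivalently $\sigma\alpha\theta = -\ln(1-\sigma)$, i.e. $(1-\sigma)^{-1} = e^{\sigma\alpha\theta}$ (here we use $\sigma \in (0,1)$, which is implicit in $\theta$ being well-defined). Hence
\[
(1-\sigma)^{a'_k - a_k} = (1-\sigma)^{-\sum_i A_{ki}(G_i+v_i)} = e^{\sigma\alpha\theta \sum_i A_{ki}(G_i + v_i)} .
\]
Multiplying this by the lower bound on the exponential factor, the two exponents are exact negatives of one another, so the product is exactly $1$. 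Thus the ratio is $\geq 1$, which is the claim.

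There is no real obstacle here beyond bookkeeping: the only thing to be careful about is selecting the correct side of Proposition~\ref{a1simple-bound-prop} (one wants the bound that carries the $G_i\theta$ term, not the $G_i/\alpha$ term, so that it matches the $(1-\sigma)$ power after applying the definition of $\theta$) and tracking the signs of $\sigma\alpha$, $A_{ki}$, and $\ln(1-\sigma) < 0$. The computation is essentially the observation that each unit of $\theta$ removed from $\hat x_i$ (via $v_i$) and each forced rounding up (via $G_i$) decreases $a_k$ by $A_{ki}$ while decreasing $A_k \cdot \hat x$ by at least $A_{ki}\theta$, and $\theta$ is precisely calibrated so that these two effects on $(1-\sigma)^{a_k} e^{\sigma\alpha A_k\cdot\hat x}$ are in balance (with the inequality slack coming only from the $G_i$ terms, where the true drop in $\hat x_i$ can exceed $G_i\theta$... actually is at most $\theta$, giving the stated direction).
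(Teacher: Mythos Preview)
Your proof is correct and follows essentially the same approach as the paper: both use the lower bound $\hat x'_i \geq \hat x_i - (v_i + G_i)\theta$ from Proposition~\ref{a1simple-bound-prop}, combine it with $a'_k = a_k - \sum_i A_{ki}(G_i + v_i)$, and observe that the defining identity $(1-\sigma)e^{\sigma\alpha\theta} = 1$ makes the two contributions cancel. Your presentation via the ratio is slightly more streamlined, but the content is identical.
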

\begin{proof}
Let $r = \sum_i A_{ki} (G_i + v_i)$, so that $a'_k = a_k - r$.  By Proposition~\ref{a1simple-bound-prop}, we have  $A_k \bullet \hat x' = \sum_i A_{ki} \hat x'_i \geq \sum_{i} A_{ki} (\hat x_i - v_i \theta - G_i \theta) = a_k - r \theta$. Then $(1-\sigma)^{a'_k} e^{\sigma \alpha A_k \bullet \hat x'} = (1-\sigma)^{a_k - r} e^{\sigma \alpha A_k \bullet \hat x'} \geq (1-\sigma)^{a_k - r} e^{\sigma \alpha (a_k - r \theta)}  =(1-\sigma)^{-a_k} e^{-\sigma \alpha a_k}$.
\end{proof}

We summarize our analysis of the ROUNDING algorithm:
\begin{theorem}
\label{a1totalthm}
Let $\sigma \in [0,1], \alpha > \frac{-\ln(1-\sigma)}{\sigma}$. Suppose that $A \hat x \geq a$ for a vector $\hat x \in \mathbb R_{\geq 0}^n$. Then at the end of the ROUNDING algorithm, for each $i \in [n]$ we have
\begin{align*}
 \bE[x_i] \leq \alpha \hat x_i \biggl( 1 + \sigma \sum_k \frac{A_{ki}}{e^{\sigma \alpha a_k} (1-\sigma)^{a_k} - 1} \biggr), \text{ and } x_i \leq    \left \lceil \hat x_i \cdot \frac{\alpha \sigma}{-\ln(1-\sigma)} \right  \rceil  \text{with probability one}, \qquad
\end{align*}
The expected number of resamplings for the RELAXATION algorithm is at most $\sum_k \frac{1}{e^{\sigma \alpha a_k} (1-\sigma)^{a_k} - 1}$.
\end{theorem}
\begin{proof}
For the first bound, define $T_i =  1 + \sigma \sum_k \frac{A_{ki}}{e^{\sigma \alpha a_k} (1-\sigma)^{a_k} - 1}$. By Theorem~\ref{a1exi-thm} and Proposition~\ref{a1easier-prop}, we have 
$$
\Pr( x'_i = 1) \leq \alpha \hat x'_i \Bigl( 1 + \sigma \sum_{k} \frac{A_{ki}}{(1-\sigma)^{a'_k} e^{\sigma \alpha A_k \bullet \hat x'} - 1} \Bigr) \leq \alpha \hat x'_i T_i.
$$
So, using Proposition~\ref{a1simple-bound-prop},  we estimate $\bE[x_i]$ by:
{\allowdisplaybreaks
\begin{align*}
\bE[x_i] &= v_i + G_i + \bE[x'_i] \leq v_i + G_i + \alpha \hat x'_i T_i \leq v_i + G_i + \alpha (\hat x_i - \theta v_i - G_i/\alpha) T_i \\
&\leq v_i (1 - \alpha \theta)  + \alpha \hat x_i T_i \leq \alpha \hat x_i T_i \qquad \qquad \text{as $\alpha \theta = \frac{-\ln(1-\sigma)}{\sigma} \geq 1$}
\end{align*}
}
The bound on the expected number of resamplings is similar.

For the first bound, we must show that $x_i \leq \lceil \hat x_i/\theta \rceil$.
If $\hat x_i$ is not a multiple of $\theta$, then $x_i = x'_i + G_i + \lfloor \hat x_i / \theta \rfloor$. If $G_i = 1$, then $\hat x'_i = 0$ which implies that $x'_i = 0$. So $G_i + x'_i \leq 1$ and hence $x_i \leq 1 + \lfloor x_i / \theta \rfloor = \lceil x_i / \theta \rceil$. If $\hat x_i$ is a multiple of $\theta$, then $G_i = \hat x'_i = 0$ and  $x_i = \lfloor \hat x_i / \theta \rfloor = \lceil \hat x_i / \theta \rceil$.
\end{proof}

\section{Bounds in terms of $\amin$ and $\Delta_1$}
\label{a1column-sparse}
Theorem~\ref{a1totalthm} has been stated to give bounds on the ROUNDING algorithm which are as general as possible. We can simplify the formula by reducing it to the two parameters  $\Delta_1$, the maximum ${\ell}_1$-norm of any column of $A$, and $\amin$, the minimum value of $a_k$. Recall also that we have defined  $\gamma = \frac{\ln (\Delta_1+1)}{\amin}$.  Before describing our results, we note a useful clean-up step to pre-process problem instances.
\begin{theorem}
\label{a1modify-thm}
Given a covering system $A, a$, there is an algorithm running in time $O(\nnz(A))$ to generate a modified system $A', a'$ which satisfies the following properties:
\begin{enumerate}
\item The integral solutions of $A, a$ are precisely the same as the integral solutions of $A', a'$;
\item $\amin' \geq 1$ and $\Delta_1' \geq 1$ and $\frac{\ln(\Delta'_1 + 1)}{\amin'} \leq \frac{\ln (\Delta_1 + 1)}{\amin}$.
\item $\nnz(A') \leq \nnz(A)$.
\end{enumerate}
\end{theorem}
\begin{proof}
If $\Delta_1 < 1$, then we can scale up both $A, a$ by $1/\Delta_1$. If any constraint has $a_k \leq 0$, then we drop the constraint. If any entry $A_{ki}$ has $A_{ki} > a_k$,  we replace it with $A_{ki} = a_k$.  Finally, if $a_k \in (0,1)$ for some $k$, then we replace row $A_k$ with $A'_k = A_k/a_k$ and replace $a_k$ with $a'_k = 1$.    

If $\Delta_1 \geq 1$, then the first transformation changes $\amin$ to $1$ and $\Delta$ to $1$, thus yielding $\frac{\ln(\Delta'_1 + 1)}{\amin'} =  \frac{\ln 2}{\amin/\Delta_1} \leq \frac{\ln (\Delta_1 + 1)}{\amin}$. The second two transformations only decrease $\Delta_1$ and do not change $\amin$. Finally, the third transformation scales $\Delta_1$ by at most $1/\amin$, so that $\frac{\ln(\Delta'_1 + 1)}{\amin'} =  \ln( \frac{\Delta_1}{\amin} + 1) \leq \frac{\ln (\Delta_1 + 1)}{\amin}$.
\end{proof}

After this pre-processing step, we may run ROUNDING algorithm, getting our main algorithmic results.  Recall that we have defined
\begin{theorem}
\label{a1th1}
Consider a CIP system $A$ with $\Delta_1, \amin \geq 1$, and a solution $\hat x$ to its basic LP.  With appropriate choices of $\sigma, \alpha$  the ROUNDING algorithm yields a solution $x \in \mathbb Z_{\geq 0}^n$ satisfying
$$
\bE[x_i] \leq \hat x_i \bigl( 1 + \gamma  + 10 \ln(1 + \sqrt{\gamma})  \bigr), \qquad \qquad
x_i \leq \left\lceil \hat x_i \cdot \frac{2 \gamma}{\ln(1+\gamma)} \right \rceil \text{ with probability one}
$$
The expected running time of this algorithm is $O(\nnz(A))$.
\end{theorem}
\begin{proof}
  Set $\sigma = 1 - 1/\alpha$ and $\alpha = 1 + \gamma + 4 \ln(1+\sqrt{\gamma}) > 1$. Note that $\frac{-\ln(1-\sigma)}{\sigma} = \alpha \cdot \frac{\ln \alpha}{\alpha - 1} < \alpha$. For the bound on the size of $x_i$, Theorem~\ref{a1totalthm} gives:
  $$  
    x_i \leq \Big \lceil \hat x_i  \cdot \frac{\gamma + 4 \ln(1 + \sqrt{\gamma})}{\ln(1 + \gamma + 4 \ln(1 + \sqrt{\gamma}))} \Big \rceil;
    $$
    and simple analysis shows that this $\frac{\gamma + 4 \ln(1 + \sqrt{\gamma})}{\ln(1 + \gamma + 4 \ln(1 + \sqrt{\gamma}))} \leq \frac{2 \gamma}{\ln(1+\gamma)}$.
        
 For its expected value, Theorem~\ref{a1totalthm} gives:
\begin{align*}
\bE[x_i] &\leq \hat x_i \alpha \bigl( 1 + \sigma \sum_k \frac{A_{ki}}{(1 - \sigma)^{a_k} e^{\sigma \alpha a_k} - 1} \bigr) = \hat x_i \alpha \bigl( 1 + (1 - 1/\alpha) \sum_k \frac{A_{ki}}{ e^{a_k(\alpha - 1)} \alpha^{-a_k} - 1} \bigr) \\
&\leq \hat x_i \alpha \bigl( 1 + (1 - 1/\alpha) \sum_k \frac{A_{ki}}{e^{\amin(\alpha - 1)} \alpha^{-\amin} - 1} \bigr) \leq \hat x_i \bigl( \alpha + (\alpha - 1) \frac{\Delta_1}{e^{\amin(\alpha - 1)} \alpha^{-\amin} - 1} \bigr)  \\
&\leq \hat x_i \bigl( 1 + \gamma + 10 \ln(1 + \sqrt{\gamma}) \bigr) \qquad \text{(by Proposition~\ref{a1tech-prop7}, noting that $\Delta_1 = e^{\amin } \gamma-1$)}
\end{align*}

Next, let us analyze the runtime. The initial steps of rounding and forming the residual can be done in time $O(\nnz(A))$. From Theorem~\ref{a1totalthm} and some simple analysis, we see that  the expected number of resamplings corresponding to constraint $k$ is at most
$$
\frac{1}{e^{a_k(\alpha - 1)} \alpha^{-a_k} - 1} \leq \frac{1}{e^{\amin(\alpha-1)} \alpha^{-\amin} - 1} \leq \frac{1}{(\Delta_1+1)^{ \frac{(\alpha-1) - \ln \alpha}{\gamma}} - 1} \leq  1
$$

In each resampling step, we must draw a new random value for all the variables. Resampling constraint $k$ takes time $O( \nnz(A_k))$, and thus, the overall expected time for all resamplings is at most $\sum_k O( \nnz(A_k) ) = O(\nnz(A))$.
\end{proof}

\begin{corollary}
  \label{alg-cor}
  For a CIP instance without multiplicity constraints, there is an algorithm to generate a feasible solution $x \in \mathbb Z_{\geq 0}^n$ in expected polynomial time with
$$
C \bullet x \leq \bigl( 1 + \gamma  + O(\ln(1 + \sqrt{\gamma})) \bigr) \textup{OPT}
$$
\end{corollary}
\begin{proof}
First apply Theorem~\ref{a1modify-thm} to ensure that $\Delta_1, \amin \geq 1$; the resulting CIP has a parameter $\gamma' = \frac{\ln(\Delta'_1+1)}{\amin'} \leq \gamma \leq \ln (1+m)$. Next, find an optimal solution $z \in \mathbb R_{\geq 0}^n$ to the corresponding basic LP, of value $Z = C \bullet z$. Clearly $Z \leq \text{OPT}$ since $\mathcal Z$ is a relaxation.

Now apply Theorem~\ref{a1th1}, and denote the resulting solution by $x \in \mathbb Z_{\geq 0}^n$. This satisfies $\bE[C \bullet x]  \leq (1  + \gamma'  + 10 \ln(1 + \sqrt{\gamma'})) Z \leq (1 + \gamma + 10 t) Z$ where we define $t = \ln(1 + \sqrt{\gamma})$. Also, since $x$ satisfies all the covering constraints, then $x$ is also a solution to the linear program $\mathcal Z$; this implies that $C \bullet x \geq Z$ with probability one.

By applying Markov's inequality to the non-negative random variable $C \bullet x - Z$, we see that
\begin{align*}
 \Pr \bigl( C \bullet x \geq (1 + \gamma + 20 t) Z \bigr) &\leq \frac{\gamma + 10 t}{\gamma + 20 t} \leq 1 - \Omega ( 1/t  ) \leq  1 - \Omega (1 / \log m)
\end{align*}
So after $O(\log m)$ expected repetitions of this process, we get an integral solution $x$ which satisfies the covering constraints and where $C \bullet x \leq (1 + \gamma + 20 t) Z \leq (1 + \gamma + O(\ln(1 + \sqrt{\gamma}))) \text{OPT}$.
\end{proof}

Corollary~\ref{alg-cor} requires solving the basic LP exactly, which may be slow (although it can be done in polynomial time). By using a faster approximate LP solver, we can improve the overall runtime.
\begin{corollary}
  For a CIP instance $A$ without multiplicity constraints, there is an algorithm that obtains a feasible solution $x \in \mathbb Z_{\geq 0}^n$ in $\tilde O(\nnz(A)/\delta)$ time satisfying
$$
C \bullet x \leq \bigl(1 + \delta \bigr) \bigl( 1 + \gamma  + O(\ln(1 + \sqrt{\gamma}))  \bigr) \textup{OPT}
$$
(The $\tilde O$ factor here hides polylogarithmic terms.) 
\end{corollary}
\begin{proof}
By applying Theorem~\ref{a1modify-thm}, we may assume without loss of generality that $\Delta_1 \geq 1, \amin \geq 1$.

Wang et al. \cite{wang} gave an algorithm with runtime $\tilde O(\nnz(A)/\delta)$  to get a solution $\hat x$ to the basic LP  satisfying $C \bullet \hat x \leq (1 + \delta) \text{OPT}$.  Theorem~\ref{a1th1} applied to $\hat x$ yields a solution $x$ with $\bE[C \bullet x] \leq (1+\delta)( 1 + \gamma + 10 \ln(1 + \sqrt{\gamma}))$. By Markov's inequality, after $O(1/\delta)$ expected iterations, we achieve an integral solution which has $C \bullet x \leq (1+2 \delta)( 1 + \gamma + 10 \ln(1 + \sqrt{\gamma}))$. Since each application of Theorem~\ref{a1th1} takes time $O(\nnz(A))$, the rounding process takes $O(\nnz(A)/\delta)$ time.
\end{proof}

Theorem~\ref{a1th1} can also be modified to $\epsilon$-respect the multiplicity constraint.
\begin{theorem}
\label{a1th2}
Consider a CIP system $A$ with $\Delta_1, \amin \geq 1$, and a solution $\hat x$ to its basic LP.  Let $\epsilon \in [0,1]$ be given. Then, with an appropriate choice of $\sigma, \alpha$ the ROUNDING algorithm yields a solution $x \in \mathbb Z_{\geq 0}^n$ satisfying
$$
\bE[x_i] \leq \hat x_i ( 1 + \epsilon + 4 \gamma/\epsilon ), \qquad \qquad
x_i \leq \lceil \hat x_i (1+\epsilon) \rceil \text{ with probability one}
$$
The expected running time of this algorithm is $O(\nnz(A))$.
\end{theorem}
\begin{proof}
  Set $\alpha = \frac{-(1+\epsilon) \ln(1 - \sigma)}{\sigma}$, where $\sigma \in (0,1)$ is a parameter to be determined. Then by Theorem~\ref{a1totalthm}, we have $x_i \leq \lceil \hat x_i (1+\epsilon) \rceil$ at the end of the ROUNDING algorithm. We clearly have $\alpha \geq \frac{-\ln(1-\sigma)}{\sigma}$ and so by Theorem~\ref{a1totalthm}:
\begin{align*}
\bE[x_i] &\leq \alpha \hat x_i \Bigl( 1 + \sigma \sum_k \frac{A_{ki}}{(1-\sigma)^{a_k} e^{\sigma \alpha a_k} - 1} \Bigr)
\leq \alpha \hat x_i \Bigl( 1 + \sigma \frac{\Delta_1}{(1 - \sigma)^{-\amin \epsilon} - 1} \Bigr)
\end{align*}

Now set $\sigma = 1 - e^{-\gamma/\epsilon}$, which is in the range $(0,1)$. Substituting in this value gives
$$
\bE[x_i] \leq \hat x_i \bigl( \epsilon^{-1} \bigl( 2 + \frac{1}{e^{\gamma/\epsilon} - 1} \bigr) (1 + \epsilon) \gamma \bigr)
$$
Simple calculus shows $\epsilon^{-1} (2 + \frac{1}{e^{\gamma/\epsilon} - 1}) (1 + \epsilon) \gamma \leq 1 + \epsilon + (2 + 2/\epsilon) \gamma$, which is at most $1 + \epsilon +  4 \gamma/\epsilon$  by our assumption that $\epsilon \in [0,1]$. The bound on runtime follows the same lines as Theorem~\ref{a1th1}.
\end{proof}

\section{Respecting multiplicity constraints}
\label{resp-multiplicity-sec}
We next describe a rounding algorithm to exactly preserve the multiplicity constraints. This follows the approach of~\cite{C00, KY05} based on a stronger linear program called the \emph{knapsack-cover (KC)} inequalities. 

\begin{definition}[The KC residual problem]
For a given CIP problem instance and any set $X \subseteq [n]$, we define the \emph{KC-residual for $X$}, denoted $R(X)$, to be a new CIP problem obtained by setting $x_i = d_i$ for all $i \in X$, and then applying Theorem~\ref{a1modify-thm} to the resulting system.
\end{definition}

\begin{proposition}[\cite{KY05},\cite{C00}]
\label{resid-prop}
Let $\Delta_0$ be the maximum $\ell_0$-column norm of $A$.  For any $X \subseteq [n]$, the constraint system $R(X)$ has $a'_{\text{min}} \geq 1$ and $\Delta'_1 \leq \Delta_0$. Furthermore, any integral solution to the original CIP also satisfies $R(X)$.
\end{proposition}

Using this relaxation, we get the following algorithm:
\begin{theorem}
  \label{a1th3}  
  There is an expected-polynomial time algorithm to find a feasible solution $x \in \mathbb Z_+^n$ for a CIP instance with
$$
C \bullet x \leq \bigl( \ln \Delta_0 + O(\log \log \Delta_0) \bigr) \text{OPT},
$$
\end{theorem}
\begin{proof}
Let $\gamma_0 = \ln(\Delta_0 + 1)$ and let $\delta = \frac{2 \gamma_0}{\ln(1 + \gamma_0)}$. We begin by finding a fractional solution $\hat x$ which minimizes $C \bullet \hat x$, subject to the conditions that $\hat x_i \in [0, d_i]$ and such that $\hat x$ satisfies $R(\overline X)$ for the set $\overline X = \{i \mid \hat x_i \geq d_i / \delta \}$. This can be done via cut-or-solve using the ellipsoid method: given some putative $\hat x$, one can form  $\overline X$ and $R(\overline X)$ and determine which constraint in it, if any, is violated. (See~\cite{KY05} for more details.)  

We now get our integral solution $x$ by setting $x_i = d_i$ for $i \in \overline X$ and using  the ROUNDING algorithm on $\hat x$ with respect to the system $R(\overline X)$.

This clearly gives $x_i \leq d_i$ for $i \in \overline X$.  By Proposition~\ref{resid-prop}, $R(\overline X)$ has parameter $\gamma' = \frac{\ln(\Delta_1' + 1)}{\amin'} \leq \gamma_0$. So for $i \notin \overline X$, we have $x_i \leq \lceil \delta \hat x_i \rceil$; this is at most $\lceil d_i \rceil = d_i$ by definition of $\overline X$. So $x$ satisfies the multiplicity constraints. 

Also, we have $\bE[x_i] \leq d_i \leq  \hat x_i \delta \leq x_i (\gamma_0 + O(1))$ for $i \in \overline X$ and Theorem~\ref{a1th1} shows $\bE[x_i] \leq \hat x_i (1 + \gamma' + 10 \ln(1+ \sqrt{\gamma'})) \leq \hat x_i(1 + \gamma_0 + 10 \ln(1+ \sqrt{\gamma_0}))$ for $i \notin \overline X$.  Combining these two cases, we have 
$$
\bE[C \bullet x] \leq (1 + \gamma_0 + c \ln(1+ \sqrt{\gamma_0})) C \bullet \hat x
$$

By Proposition~\ref{resid-prop}, this implies that $C \bullet \hat x \leq  (1 + \gamma_0 + c \ln(1+ \sqrt{\gamma_0})) \text{OPT}$. Since $x$ satisfies the covering constraints and multiplicity constraints, we have $C \bullet x \geq \text{OPT}$ with probability one. Applying Markov's inequality to the non-negative random variable $(C \bullet x) - \text{OPT}$ and noting that $\gamma_0 \leq O(\log m)$, we see that  after $O(\log m)$ expected repetitions of this process, we achieve a solution $x$ satisfying all the multiplicity constraints as well as
\[
C \bullet x \leq (1 + \gamma_0 + 2 c \ln(1+ \sqrt{\gamma_0})) \text{OPT} \leq (\ln \Delta_0 + O(\log \log \Delta_0)) \text{OPT}. \qedhere
\]
\end{proof}

\section{Lower bounds on approximation ratios}
\label{a1lb-sec}
We now provide lower bounds on CIP approximation ratios. These bounds fall into two categories: computational hardness (which follows from inapproximability of set cover), and integrality gaps for the basic LP.  The formal statements of these results contain numerous qualifiers and technical conditions. We summarize these informally here:
\begin{enumerate}
\item Under the hypothesis $P \neq NP$, any polynomial-time algorithm to solve the CIP without multiplicity constraints must have approximation ratio at least $\max(\gamma, 1 + \gamma/2)$. Likewise, the basic LP has integrality gap at least $\max(\gamma, 1 + \gamma/2)$.
\item Under the hypothesis $P \neq NP$, any polynomial-time algorithm to solve the CIP with multiplicity constraints must have approximation ratio $\ln \Delta_0 - O(\ln \ln \Delta_0)$.
\item The gap between solutions to the basic LP, and integral solutions which $\epsilon$-respect the multiplicity constraints, can be as large as $\Omega(\gamma/\epsilon)$. 
\end{enumerate}

We contrast these lower bounds with the upper bounds achieved by our algorithms:
\begin{enumerate}
\item For CIP without multiplicity constraints, Theorem~\ref{a1th1} gives an approximation ratio close to $\gamma$ (for large $\gamma$) and of order $1 + O(\sqrt{\gamma})$ (for small $\gamma$).
\item For multiplicity constraints, Theorem~\ref{a1th3} gives approximation ratio $\ln \Delta_0 + O(\log \log \Delta_0)$.
\item For $\epsilon$-respecting multiplicity constraints, Theorem~\ref{a1th2} gives approximation ratio $O(\gamma/\epsilon)$.
\end{enumerate}

\subsection{Hardness results}
Set cover is a well-studied special case of CIP. A number of precise hardness results are known, based on a construction of Feige~\cite{feige} relating approximation of set cover to exactly solving SAT. A closely related construction of Trevisan \cite{trevisan} applies to instances where the sets have bounded size. We quote a crisp formulation of this result given in \cite{chlebik} as follows:
\begin{theorem}[\cite{trevisan, chlebik}]
\label{trevisan-thm}
There is an absolute constant $c > 0$ with the following property. Assuming $P \neq NP$, any polynomial-time algorithm  to approximate set cover on instances where the sets have size at most $B$, must have an approximation ratio of at least $\ln B - c \ln \ln B$.
\end{theorem}

This can be immediately adapted to hardness of CIP:
\begin{proposition}
\label{a1hard-prop0}
Assuming $P \neq NP$, there is an absolute constant $c > 0$ with the following property. For any polynomial-time algorithm $\mathcal A$ to approximate CIP and any integer value $d \geq 2$ there exist problem instances with $\Delta_0 \leq d$ where $\mathcal A$ has approximation ratio at least $\ln d - c \ln \ln d$.
\end{proposition}
\begin{proof}
A set cover instance in which the sets have size at most $d$ can be encoded as a CIP with $\Delta_0 \leq d$. To do so, let $x_j$ be an indicator variable that the set $S_j$ appears in the cover. Each item $i \in [n]$ gives a constraint $\sum_{j: i \in S_j} x_j \geq 1$. The $\ell_0$-column norm corresponding to a variable $x_j$ is $|S_j| \leq d$. Thus, the result follows from Theorem~\ref{trevisan-thm}.
\end{proof}

Thus, when $\Delta_0$ is large, the approximation ratio of Theorem~\ref{a1th3} is optimal up to first-order. We next show inapproximability as a function of $\Delta_1$ and $\amin$.  This construction depends on a combinatorial result of~\cite{caro-tuza} on the independent sets in hypergraphs, which we defer to the appendix.
\begin{proposition}
\label{a1hard-prop0a}
Assuming $P \neq NP$, there is any absolute constant $c > 0$ with the following property. For any polynomial-time algorithm $\mathcal A$ to approximate CIP without multiplicity constraints, and any integers $d \geq 2, a \geq 2$,  there exist problem instances with $\Delta_1 \leq d, \amin \geq a$ for which $\mathcal A$ has approximation ratio at least
$$
\frac{\ln d - c \ln \ln d}{a (1 - (e d)^{-1/(a-1)})}
$$
\end{proposition}
\begin{proof}
Let us fix $d, a$, and consider some algorithm $\mathcal A$ guaranteeing approximation ratio $r$. Consider a set cover instance with sets of size at most $d$ on domain $[n]$. Form a CIP instance, which has a constraint for each $i \in [n]$ given by $\sum_{j: i \in S_j} x_j \geq a$. This CIP has $\Delta_1 \leq d$ and $\amin = a$. 

Suppose the set cover instance has an optimal solution $\mathcal S$ with $|\mathcal S| = k$. Then the CIP has a corresponding solution of value $a k$ derived by setting $x_j = a [[ S_j \in \mathcal S ]]$.  The algorithm $\mathcal A$ then generates an integral solution $x$ with $\sum_j x_j \leq r a k$. Consider now the multi-set $\mathcal S'$ with $x_j$ copies of each set $S_j$. Every $i \in [n]$ appears in at least $a$ sets of $\mathcal S'$, and $|\mathcal S'| \leq r a k$. (These are both counted with multiplicity). As we show in Proposition~\ref{rrs1}, there is a polynomial-time algorithm to find a set cover $\mathcal S'' \subseteq \mathcal S'$ of size at most
$$
|\mathcal S'' | \leq r a k \bigl( 1 - (e d)^{-1/(a-1)} \bigr)
$$

Here, $\mathcal S''$ is a solution to the original set cover instance. So by Theorem~\ref{trevisan-thm} we must have  $r a \bigl( 1 - (e d)^{-1/(a-1)} \bigr) \geq \ln d - c \ln \ln d$.
\end{proof}

\begin{corollary}
\label{a1hard-prop00a}
Assuming $P \neq NP$, suppose that a polynomial-time algorithm to approximate CIP without multiplicity constraints guarantees an approximation ratio $f(\gamma)$ for some increasing function $f$. Then for all $\gamma > 0$ we have
$$
f(\gamma) \geq \frac{\gamma}{1 - e^{-\gamma}}
$$
\end{corollary}
\begin{proof}
  For every integer $a \geq 2$ and $d = \lfloor e^{a \gamma} \rfloor$,  Proposition~\ref{a1hard-prop0a} shows
  $$
f(\gamma) \geq f \Bigl( \frac{\ln d}{a} \Bigr) \geq \frac{\ln d - c \ln \ln d}{a\bigl( 1 - (e d)^{-1/(a-1)} \bigr) } \geq \frac{\ln (e^{a \gamma} - 1) - c \ln \ln (e^{a \gamma}) }{a\bigl( 1 - (e^{a \gamma + 1})^{-1/(a-1) } \bigr) }
$$

Since this holds for every integer $a \geq 2$, $f(\gamma)$ must be at least equal to the limit of the RHS as $a \rightarrow \infty$, which is $\frac{\gamma}{1 - e^{-\gamma}}$.
\end{proof}

Note that $f(\gamma) \geq \max(1 + \gamma/2, \gamma)$. To our knowledge, this is the first non-trivial hardness result in the regime $\gamma \approx 0$; previous works show, for instance, approximation ratios or integrality gaps of the form $\Omega(\gamma)$, which is of course vacuous when $\gamma \approx 0$.  Note in particular that the bound of Theorem~\ref{a1th1} is optimal to first order (as a function of $\gamma$) as $\gamma \rightarrow \infty$, and is off by a polynomial factor (as a function of $\gamma$) as $\gamma \rightarrow 0$.

\subsection{Integrality gaps for the basic LP}
Formally, the \emph{integrality gap} is the ratio between the optimum feasible fractional solution to the basic LP and the optimum feasible integral solution to the underlying CIP instance. We use here a folklore randomized construction for the integrality gap of set cover.
\begin{theorem}[Folklore]
  \label{set-cover-int}
  For any $\delta > 0$ and $m$ sufficiently large, there are set cover instances on ground set $[m]$ where the basic LP has integrality gap $(1-\delta) \ln m$.
\end{theorem}

Please also see \cite{V01} for an explicit construction with integrality gap $\Omega(\ln m)$.  For the sake of completeness, we show the following more precise form of Theorem~\ref{set-cover-int}, along with a brief proof.

\begin{theorem}
  \label{set-cover-int2}
    For any $m \geq m_0$, where $m_0$ is some universal constant,  there is a set cover $\mathcal S$ instance on ground set $[m]$, with $| \mathcal S | = n = m$, and with integrality gap at least $\ln m- 10 \ln \ln m$.
\end{theorem}
\begin{proof}
Let us set $m = n$; for each value $i \in [m]$, select exactly $s = \lceil p n \rceil$ positions $a_1, \dots, a_s$ uniformly at random in $[n]$ without replacement, and add element $i$ to the sets $S_{a_1}, \dots, S_{a_s}$. Here parameter $p$ satisfies $p \rightarrow 0$ as a function of $m$.  As each element $i \in [m]$ appears in exactly $s$ sets, setting $\hat x_j = 1/s$ for every $j = 1, \dots, n$ gives a valid fractional solution. Thus, the optimal fractional solution value $\hat T$ satisfies $\hat T \leq n/s \leq 1/p$.

  Now, consider a putative integral solution $x$ of weight $t$. Each $i \in [m]$ has a probability of $\binom{n-t}{s} / \binom{n}{s}$ that it is not covered by $x$. So  the  total probability that $x$ is a valid solution is at most
  $$
(1 - \tbinom{n-t}{s}/\tbinom{n}{s})^m \leq e^{-m \tbinom{n-t}{s}/\tbinom{n}{s}} \leq e^{-m (\frac{n-s-(t-1)}{n})^t} \leq e^{-m (1 - p - t/n)^t}
$$

Tasking a union bound over all possible solutions $x$, we have 
  $$
\Pr(\text{$\mathcal S$ has a solution of weight $t$}) \leq \tbinom{n}{t} e^{-m (1 - p - t/n)^t} \leq e^{t \ln n - m (1 - p)^t + m t^2/n}
$$

If this expression is smaller than one, then with positive probability all integral solutions satisfy $T > t$. Simple analysis shows that this approaches $0$ when  $t = \ln m (\ln m - 10 \ln \ln m)$ and $p = 1/\ln m$. Thus, for sufficiently large $m$, we have $T/\hat T \geq \frac{t}{1/p} = \ln m - 10 \ln \ln m$.
\end{proof}

Using this as a starting point, we show integrality gaps for the basic LP.
\begin{proposition}
  \label{int-prop0a}
For any integer $a \geq 2$ and $m \geq m_0$, where $m_0$ is a sufficiently large constant, there is a CIP instance on $m$ constraints which share a common RHS value $a$, for which the basic LP has integrality gap at least
  $$
   \frac{\ln m - 10 \ln \ln m}{a  (1 - (e m)^{-1/(a-1)})}
   $$
\end{proposition}
\begin{proof}
  Consider the set cover instance $\mathcal S$ of Theorem~\ref{set-cover-int2}, with optimal integral solution $T$ and optimal fractional solution $\hat T$ such that $T/\hat T \geq \ln m - 10 \ln \ln m$. Form the corresponding CIP instance $I$ where the RHS value is set to $a$ instead of $1$. The optimal fractional solution value is precisely $\hat T' = a \hat T$. 
  
  Suppose that $I$ has an optimal integral solution $\mathcal S'$ of weight $T'$. This solution can be viewed as a multi-set which covers every element in the ground set at least $a$ times. Since each set in $\mathcal S$ clearly has size at most $m$, Proposition~\ref{rrs1} shows that $\mathcal S$ has a subcover of size at most
$$
t = T' \Bigl( 1 - \frac{1}{1 - (e m)^{-1/(a-1)}} \Bigr)
$$
Since $T \leq t$, this implies that
\[
T' / \hat T' \geq \frac{T}{ (1 - (e m)^{-1/(a-1)}) a \hat T} \geq \frac{\ln m - O(\ln \ln m)}{a   (1 - (e m)^{-1/(a-1)})} \qedhere
\]
\end{proof}

We remark that for fixed $\gamma$ and $a = \frac{\ln(m+1)}{\gamma}$, this implies that the integrality gap goes to $\frac{\gamma}{1 - e^{-\gamma}} \geq \max( 1 + \gamma/2, \gamma)$ as $m \rightarrow \infty$.

\begin{proposition}
Let $\epsilon \in (0,1)$ and let $a$ be a positive integer. For any $\delta > 0$ and $m$ sufficiently large, there is a CIP instance on $m$ constraints with common RHS value $a$ and a parameter $d \geq 0$ such that the fractional solution $\hat x \in [0,d]^n$ has objective value $\hat T$, the optimal integral solution in $x \in \{0, 1, \dots, \lceil (1 + \epsilon) d \rceil \}^n$ has objective value $T$, and
$$
T/ \hat T \geq \frac{\ln m - O(\ln \ln m)}{a \epsilon} \geq \Omega(\gamma/\epsilon)
$$
\end{proposition}
\begin{proof}
  Let $\mathcal S = \{S_1, \dots, S_n \}$ be the set cover instance of Theorem~\ref{set-cover-int}  on ground set $[m]$. Form the CIP instance $A$ on $n+m$ variables, wherein for each $k \in [m]$ we have a constraint
$$
\frac{a}{K(1+\epsilon) + 1} x_{n+k} + \sum_{i \in [n], S_i \ni k}^n x_i \geq a
$$

We use objective function $C \bullet x = \sum_{i=1}^n x_i$. We set $d_i = \infty$ for $i = 1, \dots, n$ and we set $d_i = K$ for $i = m+1, \dots, m+n$; here $K$ is an arbitrarily large integer parameter.  (In particular, for $K$ sufficiently large, all the coefficients in this constraint are in the range $[0,1]$.)

Suppose now that $\hat z_1, \dots, \hat z_n$ is an optimal fractional solution to the basic LP corresponding to $\mathcal S$.  Then let $v = \frac{a ( 1 + \epsilon K)}{1 + (1 + \epsilon) K}$ and consider the fractional solution $\hat x$ defined by setting $\hat x_i = v \hat z_i$ for $i \leq n$ and $\hat x_i = K$ for $i > n$.  For any constraint $k$, this gives
\begin{align*}
\frac{a}{K(1+\epsilon) + 1} \hat x_{m+k} + \sum_{S_i \ni k} \hat x_i &= \frac{a}{K(1+\epsilon) + 1} K + v \sum_{S_i \ni k} \hat z_i \geq \frac{a}{K(1+\epsilon) + 1} K + v = a
\end{align*}
and so $\hat x$ is a valid fractional solution to $A$; its objective function is $\hat T \leq \sum_{i=1}^n v \hat x_i = v \hat T'$, where $\hat T'$ is the optimal fractional solution to the basic LP of $\mathcal S$.

On the other hand, consider an integral solution $x$ to $A$. As $x_{m+k} \leq \lceil (1+\epsilon) K \rceil$, every constraint $k$ has $
\frac{a}{K(1+\epsilon) + 1} (1+\epsilon) K + \sum_{S_i \ni k} x_i \geq a$,  which implies that $\sum_{S_i \ni k} x_i > 0$. Since $x$ is integral, it is a solution to $\mathcal S$. Thus, $T \geq T'$, where $T'$ is the optimal integral solution to $\mathcal S$. So we see that $T/\hat T \geq \frac{T'}{v \hat T'} \geq \frac{\ln m - O(\ln \ln m)}{v}$. Taking the limit as $K \rightarrow \infty$,  the integrality gap is at least $\frac{\ln m - O(\ln \ln m)}{a \epsilon}$ for $K$ sufficiently large. 
\end{proof}


\section{Negative correlation for RELAXATION}
\label{a1multi-crit-sec}
We will show that the values of $x$ produced by the RELAXATION algorithm obey a type of negative correlation property. Our main result will be the following:

\begin{theorem}
\label{a1mthm}
Suppose $x \in [0,1/\alpha)^n$ and $\alpha > \frac{-\ln(1-\sigma)}{\sigma}$. For any $R \subseteq [n]$, we have
$$
\Pr \bigl( \bigwedge_{i \in R} x_i = 1 \bigr) \leq \prod_{i \in R} \rho_i
$$
where the vector $\rho \in \mathbb R_{\geq 0}^n$ is defined as
$$
\rho_i = \alpha \hat x_i \Bigl( 1 + \sigma \sum_k \frac{A_{ki}}{(1 - \sigma)^{a_k} e^{\sigma \alpha A_{k} \bullet \hat x} - 1} \Bigr)
$$
\end{theorem}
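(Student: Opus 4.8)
The plan is to mirror the proof of Theorem~\ref{a1exi-thm}, replacing Lemma~\ref{a1witness-tree-lemma} by the multi-variable witness estimate of Proposition~\ref{wit-prop2} and replacing the two-case analysis of how a single $x_i$ turns to one by the structural description of Proposition~\ref{a1mprop1}.

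First I would set up the witness reduction. Suppose $x_i = 1$ for all $i \in R$ at termination. By Proposition~\ref{a1mprop1} there is a tuple $(R', h, J, Z)$ with $R' \subseteq R$, an injective $h : R' \to [m]$, integers $J_1, \dots, J_m \geq 0$ with $J_k = 0$ for $k \notin h(R')$ and $J_k \geq 1$, $h^{-1}(k) \in Z_{k,J_k}$ for $k \in h(R')$, such that the event ${\mathcal E}(R, J, Z)$ holds (taking $I = R$: item (D3) is exactly condition~(2) of ${\mathcal E}$, and condition~(1) holds since the $Z_{k,j}$ are by definition the resampled sets of the run). Let $\mathcal W$ be the set of all tuples $(R', h, J, Z)$ satisfying these constraints. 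Then the event $\{\forall i \in R: x_i = 1\}$ is contained in $\bigcup_{(R',h,J,Z)\in\mathcal W} {\mathcal E}(R, J, Z)$, so a union bound followed by Proposition~\ref{wit-prop2} gives
\begin{align*}
P\Bigl(\bigwedge_{i \in R} x_i = 1 \Bigr)
&\leq \sum_{(R',h,J,Z)\in\mathcal W} P\bigl({\mathcal E}(R, J, Z)\bigr)
\leq \sum_{(R',h,J,Z)\in\mathcal W} \prod_{i \in R} p_i \prod_{(k,j)\in\text{prefix}(J)} f_k(Z_{k,j}) \\
&= \prod_{i\in R} p_i \sum_{R'\subseteq R} \sum_{\substack{h : R' \to [m]\\ \text{injective}}} \prod_{k\in h(R')} \Bigl( \sum_{J_k\geq 1} \sum_{\substack{Z_{k,1},\dots,Z_{k,J_k}\subseteq[n]\\ h^{-1}(k)\in Z_{k,J_k}}} \prod_{j=1}^{J_k} f_k(Z_{k,j}) \Bigr) ,
\end{align*}
where in the last line we reorganized the sum, using that the product over $\text{prefix}(J)$ factors over $k \in h(R')$ and that the $Z$-choices for distinct $k$ are independent. (Several tuples may produce the same ${\mathcal E}(R,J,Z)$, but summing over all of $\mathcal W$ only weakens the bound.)

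Next I would evaluate the inner sum. For fixed $k$ and $i = h^{-1}(k)$, peeling off the last set and applying Proposition~\ref{z-sum-prop} to the first $J_k-1$ factors and Proposition~\ref{z-sum-prop2} to the last,
\begin{align*}
\sum_{J\geq 1} \sum_{\substack{Z_1,\dots,Z_J\subseteq[n]\\ i\in Z_J}} \prod_{j=1}^{J} f_k(Z_j)
&= \sum_{J\geq 1} \Bigl(\sum_{Z\subseteq[n]} f_k(Z)\Bigr)^{J-1} \sum_{Z\ni i} f_k(Z) \\
&\leq \sum_{J\geq 1} s_k^{J-1}\, s_k A_{ki}\sigma
= \frac{s_k A_{ki}\sigma}{1-s_k},
\end{align*}
the geometric series converging since $s_k < 1$ (using $\alpha > \frac{-\ln(1-\sigma)}{\sigma}$). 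Writing $\tau_{k,i} = \frac{\sigma s_k A_{ki}}{1-s_k} = \frac{\sigma A_{ki}}{(1-\sigma)^{a_k} e^{\sigma\alpha A_k\cdot\hat x}-1}$, so that $\rho_i = p_i\bigl(1 + \sum_k \tau_{k,i}\bigr)$, the bound becomes $\prod_{i\in R} p_i \sum_{R'\subseteq R}\sum_{h}\prod_{i\in R'}\tau_{h(i),i}$ with $h$ ranging over injections $R'\to[m]$.

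Finally I would drop the injectivity requirement: since every $\tau_{k,i}\geq 0$, replacing ``injective $h$'' by ``arbitrary $h : R'\to[m]$'' only increases the sum, and the resulting sum factorizes,
\begin{align*}
\prod_{i\in R} p_i \sum_{R'\subseteq R} \sum_{h : R'\to[m]} \prod_{i\in R'} \tau_{h(i),i}
&= \prod_{i\in R} p_i \sum_{R'\subseteq R} \prod_{i\in R'} \sum_{k\in[m]} \tau_{k,i} \\
&= \prod_{i\in R} p_i \prod_{i\in R}\Bigl(1 + \sum_{k\in[m]}\tau_{k,i}\Bigr)
= \prod_{i\in R}\rho_i,
\end{align*}
which is the claimed bound. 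The only real difficulty is organizing the witness bookkeeping so that the injectivity constraint appears merely as a restriction on the summation index and can therefore be discarded at the end without disturbing the product structure; every other step is an instance of the $f_k$-sum identities already proved.
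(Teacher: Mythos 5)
Your proof is correct and follows the same route as the paper's: Proposition~\ref{a1mprop1} produces the witness structure $(R',h,J,Z)$, Proposition~\ref{wit-prop2} bounds each witness event, the inner sum over $(J,Z)$ factorizes over $k\in h(R')$ and is evaluated via the geometric-series identities of Propositions~\ref{z-sum-prop} and~\ref{z-sum-prop2}, and finally dropping injectivity of $h$ lets the outer sum factorize into $\prod_{i\in R}\rho_i$. The only cosmetic difference is bookkeeping in the inner sum (you peel the last set $Z_J$ directly; the paper reparametrizes with $(J',Z',W)$), which is mathematically the same computation.
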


We show this via a ``witness'' construction similar to Lemma~\ref{a1witness-tree-lemma}; however, instead of providing a witness for the event that $x_i = 1$, we provide a witness for the event that $x_{i_1} = \dots = x_{i_s} = 1$. A few details of the proof here which are identical to Lemma~\ref{a1witness-tree-lemma} will be omitted for clarity.

For any variable $i$, exactly one of the following three cases holds: $x_i = 1$ at the initial sampling, $x_i$ first becomes equal to one during some resampling of a constraint $k$, or $x_i = 0$ at the end of the algorithm. If $x_i = 1$ for the first time at the $j^{\text{th}}$ resampling of constraint $k$, we say $i$ \emph{turns} at $(k,j)$. If $x_i = 1$ initially, we say that $i$ turns at $0$.

Consider a set $I \subseteq [n]$ and a collection of sets $Z = \langle Z_{k,j} \rangle$ for $k = 1, \dots, m$  and $j = 1, \dots, J_k(Z)$, for integers $J_k(Z) \geq 0$.  We define $\text{prefix}(Z)$ to be the set of all pairs $(k,j)$ where $1 \leq j \leq J_k(Z)$ and we define the event ${\mathcal E}(I, Z)$ to be the following:
\begin{enumerate}
\item The first $J_k(Z)$ resampled sets for each constraint $k$ are respectively $Z_{k,1}, \dots, Z_{k,J_k(Z)}$
\item Each $i \in I$ turns at $0$ or at some $(k,j) \in \text{prefix}(Z)$.
\end{enumerate}

\begin{proposition}
\label{wit-prop2}
 For any $\hat x \in [0,1/\alpha)^n,  I \subseteq[n]$, and list of sets $Z$, we have
$$
\Pr( {\mathcal E}(I, Z) ) \leq \prod_{i \in I} \alpha \hat x_i \prod_{(k,j)} f_k(Z_{k,j})
$$
\end{proposition}
\begin{proof}
Let us define $D = \bigcup_{(k,j)} Z_{k,j}$; here, and in the remainder of the proof, the list of pairs $(k,j)$ is implicitly taken to range over $\text{prefix}(Z)$.  We also write $J_k$ as shorthand for $J_k(Z)$ and $p_i = \alpha \hat x_i$ and $q_i = 1 - p_i$ throughout.  

For any $v \in \{0, 1\}^n$, define ${\mathcal E}(I, Z, v)$ to be the event ${\mathcal E}(I, Z)$ occurs if we start the RELAXATION algorithm by setting $ x = v$,  and we also define ${\mathcal E}(T, I, Z, v)$ to be the event that ${\mathcal E}(I,Z, v)$ occurs \emph{and} the RELAXATION algorithm terminates in less than $T$ resamplings. We prove by induction on $T$ that for any $T \geq 0$ and any $v \in \{0, 1 \}^n$ we have
\begin{equation}
\label{eq-gh1}
\Pr( {\mathcal E}(T, I, Z,v) ) \leq \prod_{i \in I \cap D} p_i \frac{\prod_{(k,j)} f_k(Z_{k,j})}{\prod_{i \in D} q_i}
\end{equation}

Let $k$ be minimal with $A_{k} \bullet x < a_k$. If $J_{\ell} \geq 1$ for any $\ell <  k$ then event ${\mathcal E}(T, I,Z,v)$ is impossible and we are done. If $J_{k} = 0$, then ${\mathcal E}(T,I,Z,v)$ is equivalent to ${\mathcal E}(T-1,I,Z,x')$ where $x'$ is the value of the variables after a resampling; for this we use the induction hypothesis and we are done.

So suppose $J_{k} \geq 1$. In this case, the following are necessary events for ${\mathcal E}(T,I,Z,x)$:
\begin{enumerate}
\item[(C1)] $Z_{k, 1}$ is selected as the resampled set for constraint $k$
\item[(C2)] The event ${\mathcal E}(T-1,I',Z',x')$ occurs,
where $x'$ is the value of the variables after resampling, where $I' = I \cap D'$ and $D' = \bigcup_{(k', j') \neq (k, 1)} Z_{k', j'}$, and $Z'$ is derived by setting $Z'_{k,1}, \dots, Z'_{k,J_k-1} = Z_{k,2}, \dots, Z'_{k,J_k}$ (and all other entries remain the same)
\item[(C3)] For all $i \in (Z_{k,1} - D') \cap I$ we resample $x_i = 1$
\item[(C4)] For all $i \in Z_{k,1} \cap D'$ we resample $x_i = 0$
\end{enumerate}

The rationale for (C3) is that we require $i \in I$ to turn at some $(k',j') \in \text{prefix}(Z)$, and in addition $Z_{k',j'}$ is the $j'^{\text{th}}$ resampled set for constraint $k'$. This would imply that $i \in Z_{k',j'}$. However, there is only one such $(k',j')$, namely $(k',j') = (k,1)$. Thus,  $i$ must be resampled to $x_i = 1$.

The rationale for (C4) is the same as in Lemma~\ref{a1witness-tree-lemma}: if we resample $x_i = 1$, then $x_i$ can never be resampled again. In particular, we cannot have $i$ in any future resampled set. Thus if $x'_i = 1$ but $i \in Z_{k,1} \cap D'$, then the event (C2) is impossible.

As in Lemma~\ref{a1witness-tree-lemma}, the event (C1) has probability $\leq (1 - \sigma)^{-a_k} \prod_{i \in [n]} (1 - A_{ki} \sigma) \prod_{i \in Z_{k,1}} \frac{ A_{ki} \sigma}{1 - A_{ki} \sigma }$.

Event (C3), conditional on (C1), has probability $\prod_{i \in (Z_{k,1} - D') \cap I} p_i$.

Event (C4), conditional on (C1), (C3), has probability $\prod_{i \in Z_{k,1} \cap D'} q_i$.

By induction hypothesis, event (C2), conditional on (C1), (C3), (C4), has probability
$$
\Pr( (C2) ) \leq  \prod_{i \in I' - D'} p_i \times \prod_{i \in D'} q_i \times \prod_{(k',j') \in \text{prefix}(Z')} f_{k'}(Z_{k',j'})
$$

Multiplying these terms, after some rearrangement, gives Eq.~(\ref{eq-gh1}). This completes the induction, giving us
$$
\Pr({\mathcal E}(I,Z,v)) = \lim_{T \rightarrow \infty} \Pr({\mathcal E}(T, I,Z,v)) \leq \prod_{i \in I \cap D} p_i \times \frac{\prod_{(k,j)} f_k(Z_{k,j})}{\prod_{i \in D} q_i}
$$

Finally, for the bound on $\Pr({\mathcal E}(I,Z))$, observe that if $i \in D$, then $x_i$ must be equal to zero during the initial sampling and likewise if $i \in I - D$, then $x_i$ must be equal to one during the initial sampling. These events have probability $\prod_{i \in I - D} p_i \prod_{i \in D} q_i$. Conditional on this event, $\Pr({\mathcal E}(I,Z,x)) \leq \prod_{i \in I \cap D} p_i \times \frac{\prod_{(k,j)} f_k(Z_{k,j})}{\prod_{i \in D} q_i}$. Thus, multiplying the probabilities together,
\[
\Pr({\mathcal E}(I,Z)) \leq \prod_{i \in I} p_i \prod_{(k,j)} f_k(Z_{k,j}) \qedhere
\]
\end{proof}

\begin{proof}[Proof of Theorem~\ref{a1mthm}]
We first claim that, in order to have $x_i = 1$ for all $i \in R$, there must be a set $R' \subseteq R$, an injective function $h: R' \rightarrow [m]$ and a list of sets $Z_{k,j}$ satisfying the following properties:
\begin{enumerate}
\item[(D1)] For each $k = h(i)$ we have $J_{k}(Z) \geq 1$ and $i \in Z_{k, J_k(Z)}$
\item[(D2)] Each $k \notin h(R')$ has $J_k(Z) = 0$
\item[(D3)] Each $i \in R$ turns at either $0$ or at some $(k,j) \in \text{prefix}(Z)$.
\end{enumerate}

To show this, let $S_0 \subseteq R$ denote the set of variables $i \in R$ which turn at $0$. For each $k = 1, \dots, m$ let $S_k \subseteq R$ denote the variables $i \in R$ which turn at constraint $k$, where each $i \in S_k$ turns at $(k, L_i)$. The sets $S_0, S_1, \dots, S_m$ partition $R$. For each constraint $k$ we set $Z_{k,1}, \dots, Z_{k,j}$ to be the first $j$ resampled sets for $k$ where $j = \max_{i \in S_k} L_i$.  To form $R'$ and $h$, we select for each $k \in [m]$ with $S_k \neq \emptyset$ an arbitrary $i \in S_k$ which turns at $(k, J_k(Z))$; this value $i$ is placed into $R'$ with $h(i) = k$. Each $i \in S_k$ must turn at $(k,L_i)$ thus (D3) is satisfied. For $k = h(i)$ we then have $i \in Z_{k, J_{k}(Z)}$, so (D1) is satisfied.

Thus, to show an upper bound on $\Pr(\bigwedge_{ i \in R} x_i = 1)$, we take a union bound over $R', h, Z_{k,j}$ satisfying properties (D1), (D2), (D3). Lemma~\ref{wit-prop2} shows that if a list of sets $Z$ satisfies (D1), (D2), then condition (D3) holds with probability at most $\prod_{i \in R} p_i \prod_{(k,j)} f_k(Z_{k,j})$. Thus, we have
\begin{equation}
\label{r1-e}
\Pr \bigl( \bigwedge_{i \in R} x_i = 1 \bigr) \leq \sum_{\substack{R', h, Z \\ \text{satisfying (D1), (D2)}}} \prod_{i \in R} p_i \prod_{(k,j)} f_k(Z_{k,j})
\end{equation}

To enumerate over  $R', h, Z$ satisfying (D1), (D2), suppose we fix  $R'$ and $h$. In this case, for each value $k = h(i)$ we can choose arbitrary parameter $J_k \geq 1$ and sets $Z_{k, 1}, \dots, Z_{k, J_k}$ with $i \in Z_{k, J_k}$. Summing over possible values for $Z$ gives:
$$
\sum_{\substack{\text{$Z$ satisfying} \\ \text{(D1),(D2)}}} \ \ \prod_{(k,j)} f_k(Z_{k,j}) = \prod_{i \in R'} \Bigl( \sum_{J_k \geq 1} \sum_{\substack{ Z_{k,1}, \dots, Z_{k, J_k} \subseteq [n] \\ i \in Z_{k, J_k}}} \prod_{\ell=1}^{J_k} f_k(Z_{k,\ell}) \Bigr)
$$

By Proposition~\ref{z-sum-prop}, and noting that $k = h(i)$ in this expression,  this is at most
$$
\prod_{i \in R'} s_{h(i)} A_{h(i), i} \sigma \times \sum_{j' \geq 0} s_{h(i)}^{j'} = \prod_{i \in R'} \frac{s_{h(i)} A_{h(i), i} \sigma}{1 - s_{h(i)}}
$$

Summing over $R' \subseteq R$ and injective $h: R' \rightarrow [m]$ gives:
\begin{align*}
&  \sum_{\substack{R', h, Z \\ \text{satisfying (D1), (D2)}}} \negthickspace \negthickspace \negthickspace \prod_{i \in R} p_i \prod_{(k,j)} f_k(Z_{k,j}) \leq \prod_{i \in R} p_i \sum_{\substack{R' \subseteq R \\ \text{injective $h: R' \rightarrow [m]$}}}  \prod_{i \in R'} \frac{s_{h(i)} A_{h(i), i} \sigma}{1 - s_{h(i)}} \\
  & \qquad \qquad \qquad \leq \prod_{i \in R} p_i \sum_{\substack{R' \subseteq R \\ h: R' \rightarrow [m]}}  \prod_{i \in R'} \frac{s_{h(i)} A_{h(i), i} \sigma}{1 - s_{h(i)}} = \prod_{i \in R} p_i (1 + \sum_{k=1}^m \frac{s_k A_{k, i} \sigma}{1 - s_k} ) = \prod_{i \in R} \rho_i \qedhere
\end{align*}
\end{proof}

\subsection{Multiple objective functions}
The CIP framework can be extended to have multiple linear objective functions $C_1 \bullet x, \dots, C_r \bullet x$ instead of just the single objective $C \bullet x$. There may also be some over-all objective function $D$ which combines them, for example, $D = \max_{\ell} C_{\ell} \bullet x$ or $D = \sum_{\ell} (C_{\ell} \bullet x)^2$.

We note that the greedy algorithm, which is powerful for set cover, is not obviously useful in this case. Depending on the precise form of the function $D$, it may be possible to solve the fractional relaxation to optimality; for example, if $D = \max_{\ell} C_{\ell} \bullet x$, then this amounts to a linear program of the form $\min t$ subject to $C_1 \bullet x \leq t, \dots, C_r \bullet x \leq t$.

For our purposes, the algorithm used to solve the fractional relaxation is not relevant. Suppose we are given some solution $\hat x$. We now want to find a solution $x$ such that \emph{simultaneously} $C_{\ell} \bullet x \approx C_{\ell} \bullet \hat x$ for all $\ell$. Showing bounds on the expectations alone is not sufficient for this purpose.

Srinivasan \cite{Srin06} gave a randomized rounding scheme to provide this simultaneous approximation guarantee. The randomized rounding, by itself, succeeded with exponentially small probability; Srinivasan also described how to derandomize the process in $n^{O(\log r)}$ time, albeit with some loss to the approximation ratio. 

Our strategy in this case will be to use the negative correlation to show that there is a good probability that $C_{\ell} \bullet x \approx \bE[C_{\ell} \bullet x]$ for all $\ell = 1, \dots, r$. Thus, our algorithm automatically gives good approximation ratios for multi-criteria problems; the ratios are essentially the same as for the single-criterion setting, and there is no extra computational burden. The concentration bounds we use are related to Chernoff bounds, which we define next.

\begin{definition}[The Chernoff upper-tail] For $t \geq \mu$ with $\delta = \delta(\mu, t) = t/\mu - 1 \geq 0$, the Chernoff upper-tail bound is defined as $\chernoffU (\mu, t) = \Bigl( \frac{e ^ {\delta}}{(1+\delta) ^ {1+\delta}} \Bigr)^\mu$. 
\end{definition}

\begin{theorem}
\label{a1conc-thm}
Let $C \in [0,1]^n$, let $\hat x \in \mathbb R_{\geq 0}^n$ be a solution to the basic LP, and let $\alpha > \frac{-\ln(1-\sigma)}{\sigma}$ for $\sigma \in [0,1]$. Then, after running the \textup{ROUNDING} algorithm,
$$
\Pr( C \bullet x > t ) \leq \chernoffU( C \bullet \rho, t)
$$
\end{theorem}
\begin{proof}
Letting $v_i, G_i, \hat x'_i, a'_k, x'$ be the variables for the ROUNDING algorithm, we have
$$
\Pr( C \bullet x > t) = \Pr( C \bullet (v \theta + G + x') > t) = \Pr( C \bullet x' > t - C \bullet (v \theta + G))
$$

Let $\rho'$ be the vector corresponding to the $\hat x' \in [0, 1/\alpha]^n$, i.e. $\rho'_i = \alpha \hat x'_i T_i'$ where we define 
$$
T'_i =  1 + \sigma \sum_k \frac{A_{ki}}{(1 - \sigma)^{a_k'} e^{\sigma \alpha A_{k} \bullet \hat x'} - 1} 
$$

The value of $C \bullet x'$ is a sum of random variables $C_{i} x'_i$, each of which is in the range $[0,1]$. These random variables obey a negative-correlation property as shown in Theorem~\ref{a1mthm}. As shown in \cite{DBLP:journals/siamcomp/PanconesiS97}, this implies that they obey the same upper-tail Chernoff bounds as would a sum of random variables $X_i$ which are \emph{independent} and satisfy $\bE[X_i] = \rho'_i$. Therefore,
$$
\Pr( C \bullet x' > t - C \bullet (v \theta + G))  \leq \chernoffU (  C \bullet \rho', t - C \bullet (v \theta + G))
$$

Thus $\Pr(C \bullet x > t) \leq \chernoffU \bigl(  \alpha  \sum_i C_{i} \hat x'_i T_i , t - C \bullet (v \theta + G) \bigr)$. By Propositions ~\ref{a1simple-bound-prop} and \ref{a1easier-prop}, wee see that $x'_i T_i' \leq (\hat x_i - v_i \theta - G_i/\alpha) T_i$
where we define $T_i = 1 + \sigma \sum_k \frac{A_{ki}}{(1 - \sigma)^{a_k} e^{\sigma \alpha a_k} - 1}$.

Since function $\chernoffU(\mu, t)$ is always an increasing function of $\mu$, we can thus calculate:
{\allowdisplaybreaks
\begin{align*}
\Pr( C \bullet x > t) &\leq  \chernoffU \Bigl(  \alpha  \sum_i C_{i} (\hat x_i - v_i \theta - G_i/\alpha) T_i, t - C \bullet (v \theta + G) \Bigr) \\
&\leq \chernoffU \Bigl( (C \bullet \rho) - ( C \bullet (v \theta + G) ), t - ( C \bullet (v \theta + G) ) \Bigr) \\
&\leq \chernoffU ( C \bullet \rho, t ) \qquad \text{(as $\chernoffU(\mu, t) \leq \chernoffU(\mu - x, t - x)$)} \qedhere
\end{align*}
}
\end{proof}

\begin{corollary}
\label{a1cor1}
Suppose we are given a covering system as well as a fractional solution $\hat x$. Suppose that the entries of $C_{\ell}$ are in $[0,1]$. Then, with an appropriate choice of $\sigma, \alpha$ the ROUNDING algorithm yields a solution $x \in \mathbb Z_{\geq 0}^n$ such that
$$
\Pr( C_{\ell} \bullet x > t) \leq \chernoffU( \beta C_{\ell} \bullet \hat x, t)
$$
for $\beta = 1 + \gamma + 10 \ln(1+\sqrt{\gamma})$. The algorithm has expected runtime $O(\nnz(A))$.
\end{corollary}

\section{Acknowledgments}
Thanks to Vance Faber for helpful discussions and brainstorming about the integrality gap constructions, and to Dana Moshkovitz for her helpful input on inapproximability. Thanks to Chandra Chekuri and Kent Quanrud for discussions about the work \cite{cq}. Thanks to the anonymous SODA 2016 and journal reviewers for helpful comments and corrections.


\appendix

\section{Comparison with the Lov\'{a}sz Local Lemma}
\label{lll-compar-sec}
One rounding scheme that has been used for similar types of integer programs is based on a probabilistic technique known as the Lov\'{a}sz Local Lemma (LLL) introduced in \cite{EL75}; we contrast this with our approach taken here.

In the basic form of randomized rounding, one must ensure that the probability of a ``bad event'' (an undesirable configuration of a subset of the variables) --- namely, that $A_k \bullet x < a_k$ --- is on the order of $1/m$; this ensures that, with high probability, no bad events occur. This accounts for the term $\log m$ in the approximation ratio. The power of the LLL comes from the fact that the probability of a bad event is not compared with the total number of events, but only with the number of events it affects. At a heuristic level, the LLL should lead to scale-free approximation ratios for column-sparse CIP problems,  since each variable only affects a limited number of bad events. See, for example, \cite{LLRS01} which applied the LLL in a similar way to packing integer programs.

In its classical form, the LLL is non-constructive since it only shows that there is a small positive probability of avoiding all the bad events. In~\cite{MT10}, Moser \& Tardos solved this longstanding problem by introducing a resampling-based algorithm. This algorithm initially samples all random variables from the underlying probability space, and then continues to resample the variables involved in any bad-events which remain currently true. Most applications of the LLL, such as~\cite{H13}, yield polynomial-time algorithms using this framework.

One important technical limitation of the LLL is that it only depends on whether bad events affect each other, not the degree to which they do so. For CIP instances, note that the entries of $A_{ki}$ could all be extremely small yet non-zero, causing all the constraints to affect each other by a tiny amount. Consequently, the LLL naturally leads to approximation ratios in terms of $\Delta_0$ as opposed to $\Delta_1$.   In~\cite{H13}, Harvey used a quantization scheme with iterative applications of the LLL to address this issue for a related discrepancy problem. This multi-step process can lead to large constant factors in the approximation ratio. (For packing problems with no constraint-violation allowed, good approximations parametrized by $\Delta_0$, but \emph{not} in general by $\Delta_1$, are possible~\cite{bkns}.)

In the context of integer programming, the Moser-Tardos algorithm can be extended in ways which go beyond the LLL itself. In~\cite{HS13}, Harris \& Srinivasan described a variant of the Moser-Tardos algorithm based on ``partial resampling''. In this scheme, when one encounters a bad event, one only resamples a random subset of the variables. This process handles small-but-non-zero entries of $A$ in a more natural way, and leads to bounds in terms $\Delta_1$ for ``assignment-packing" integer programs with small constraint violation.

The RELAXATION algorithm can be viewed as a version of this partial resampling algorithm: on encountering a violated constraint (a bad event), it resamples a random subset of variables which currently cause that bad event. In our case, these are the variables which have $x_i = 0$. In particular, the list of sets $Z_1, \dots Z_j$ for a constraint $k$ in the RELAXATION algorithm can be viewed as one  ``branch'' of the witness tree for $x_i = 1$. 

 There is one further optimization in how we count witness trees, which was developed in \cite{H15}: we only need to keep track of when variables \emph{change} values. This yields improved bounds for LLL systems  where the bad events are positively correlated.
 Because many different problem-specific techniques and calculations are combined with a variety of LLL techniques, we view the connection with the LLL as more an informal motivation than a technical guide.
\section{Some technical lemmas}

\begin{proposition}
  \label{a1tech-prop7}
  For $\gamma > 0$ and $a \geq 1$, we have
  $$
  \alpha + (\alpha - 1) \frac{e^{a \gamma}-1}{e^{a (\alpha - 1)} \alpha^{-a} - 1} \leq 1 + \gamma + 10 \ln(1 + \sqrt{\gamma})
  $$
  where $\alpha = 1 + \gamma + 4 \ln(1 + \sqrt{\gamma})$
  \end{proposition}
\begin{proof}
  Let us first calculate $\frac{  e^{a \gamma} }{e^{a (\alpha - 1)} \alpha^{-a}}$:
  \begin{align*}
    \frac{  e^{a \gamma} }{e^{a (\alpha - 1)} \alpha^{-a}} &= \Bigl( e^{\gamma + \ln\alpha - (\alpha - 1)} \Bigr)^a = \Bigl( e^{\ln(1 + \gamma + 4 \ln(1 + \sqrt{\gamma})) - 4 \ln(1 + \sqrt{\gamma})} \Bigr)^a
    \end{align*}

  Simple analysis shows that $1 + \gamma + 4 \ln(1 + \sqrt{\gamma}) \leq (1 + \sqrt{\gamma})^4$, which (as $a \geq 1$) in turn shows that this expression is less than one. Because of this fact, we can estimate
  \begin{align*}
    \alpha + (\alpha - 1) \frac{e^{a \gamma}-1}{e^{a (\alpha - 1)} \alpha^{-a} - 1} &\leq  \alpha + (\alpha - 1) \frac{e^{a \gamma}}{e^{a (\alpha - 1)} \alpha^{-a}} =\alpha + (\alpha - 1)  \Bigl( e^{\ln(1 + \gamma + 4 \ln(1 + \sqrt{\gamma})) - 4 \ln(1 + \sqrt{\gamma})} \Bigr)^a \\
    &\leq \alpha + (\alpha - 1) e^{\ln(1 + \gamma + 4 \ln(1 + \sqrt{\gamma})) - 4 \ln(1 + \sqrt{\gamma})} \\
    &\leq 1 + \gamma + 4 \ln(1 + \sqrt{\gamma}) +     \frac{(1 + \gamma + 4 \sqrt{\gamma}) (\gamma + 4\sqrt{\gamma})}{(1 + \sqrt{\gamma})^4}
  \end{align*}

So it suffices to show that  $f(\gamma) \leq 0$ for the function $f(\gamma) =   \frac{(1 + \gamma + 4 \sqrt{\gamma}) (\gamma + 4\sqrt{\gamma})}{(1 + \sqrt{\gamma})^4} - 6 \ln(1 + \sqrt{\gamma})$. Note that $f(0) = 0$. We can compute the derivative $f'(\gamma)$ as
    $$
    f'(\gamma) = \frac{-1 - \sqrt{\gamma} - 23 \gamma - 14 \gamma^{3/2} - 3 \gamma^2}{(1 + \sqrt{\gamma})^5 \sqrt{\gamma}}
    $$

It can be algorithmically verified (e.g., by decidability of the first-order theory of real-closed fields) that this expression is negative. So $f(\gamma) \leq f(0) = 0$.
\end{proof}

\begin{proposition}
\label{rrs1}
Consider a set cover instance $\mathcal S$ on ground set $[m]$, where $|\mathcal S| = n$, such that the sets in $\mathcal S$ have size at most $d$, and every $i \in [m]$ appears in at least $a \geq 2$ sets. Then $\mathcal S$ has a solution of size at most $n (1 - (e d)^{-1/(a-1)})$, which can be found in deterministic polynomial time.
\end{proposition}
\begin{proof}
The set cover instance can be viewed as a hypergraph $H$ on vertex set $[m]$ with edge set $\mathcal S$. A solution to $\mathcal S$ is precisely an edge cover for $H$, which is equivalent to a vertex cover of the dual graph $H'$, which in turn is the complement of an independent set of $H'$.

The dual graph $H'$ has $n$ vertices, maximum degree $d$ and minimum edge size $a$. Therefore, as shown in~\cite{caro-tuza}, it has an independent set $W$ of size $|W| \geq \frac{n}{\binom{d + \frac{1}{a-1}}{d}}$, which can be found in deterministic polynomial time. Thus $H'$ has a vertex cover of size $n - |W|$.  To simplify this expression, we may estimate:
\begin{align*}
\binom{d + \frac{1}{a-1}}{d} &= \frac{d + 1/(a-1)}{d} \times \frac{d-1 + 1/(a-1)}{d-1} \times \dots \times \frac{1+1/(a-1)}{1} \\
&\leq e^{ \frac{1}{d (a-1)}} e^{ \frac{1}{(d-1) (a-1)}} \dots e^{ \frac{1}{(a-1)}} = e^{ \frac{H_d}{a-1} } \leq e^{\frac{1 + \ln d}{a-1}}  = (e d)^\frac{1}{a-1} \qedhere
\end{align*}
\end{proof}

\bibliographystyle{plain}

\bibliography{cover-bibliography}
\end{document}